\newcommand{\VaR}{\mathrm{VaR}}
\newcommand{\E}{\mathbb{E}}
\newcommand{\R}{\mathbb{R}}
\newcommand{\p}{\mathbb{P}}
\newcommand{\id}{\mathds{1}}
\newcommand{\X}{\mathcal X}
\newcommand{\esssup}{\mathrm{ess\mbox{-}sup}}
\newcommand{\essinf}{\mathrm{ess\mbox{-}inf}}
\renewcommand{\ge}{\geqslant}
\renewcommand{\le}{\leqslant}
\renewcommand{\geq}{\geqslant}
\renewcommand{\leq}{\leqslant}
\renewcommand{\epsilon}{\varepsilon}
\newcommand{\PreserveBackslash}[1]{\let\temp=\\#1\let\\=\temp}
\newcolumntype{C}[1]{>{\PreserveBackslash\centering}p{#1}}
\newcolumntype{R}[1]{>{\PreserveBackslash\raggedleft}p{#1}}
\newcolumntype{L}[1]{>{\PreserveBackslash\raggedright}p{#1}}
\def\ttabular{%
\hbox\bgroup
\let\\\cr
\def\rulea{\ifnum\rowc=\@ne \hrule height 1.3pt \fi}
\def\ruleb{
\ifnum\rowc=1\hrule height 1.3pt \else
\ifnum\rowc=6\hrule height \heavyrulewidth
   \else \hrule height \lightrulewidth\fi\fi}
\valign\bgroup
\global\rowc\@ne
\rulea
\hbox to 10em{\strut \hfill##\hfill}%
\ruleb
&&%
\global\advance\rowc\@ne
\hbox to 10em{\strut\hfill##\hfill}%
\ruleb
\cr}
\def\endttabular{%
\crcr\egroup\egroup}
\theoremstyle{plain}
\newtheorem{theorem}{Theorem}
\newtheorem{lemma}{Lemma}
\theoremstyle{definition}
\newtheorem{example}{Example}
\newtheorem{remark}{Remark}
\DeclareRobustCommand{\bsquare}{%
  \mathop{\vphantom{\sum}\mathpalette\bigstar@\relax}\slimits@
}
\newcommand{\bigstar@}[2]{%
  \vcenter{%
    \sbox\z@{$#1\sum$}%
    \hbox{\resizebox{.9\dimexpr\ht\z@+\dp\z@}{!}{$\m@th\dsquare$}}%
  }%
}
\newcommand{\dsquare}{\mathop{  \square} \displaylimits}
\begin{document}

\title{Pairwise counter-monotonicity}
\author{Jean-Gabriel Lauzier\thanks{Department of Statistics and Actuarial Science, University of Waterloo,  Canada. \Letter~{\url{jlauzier@uwaterloo.ca}}}\and Liyuan Lin\thanks{Department of Statistics and Actuarial Science, University of Waterloo,  Canada. \Letter~{\url{liyuan.lin@uwaterloo.ca}} } \and  Ruodu Wang\thanks{Department of Statistics and Actuarial Science, University of Waterloo,  Canada. \Letter~{\url{wang@uwaterloo.ca}}}}
\maketitle
\begin{abstract}
We systematically study pairwise counter-monotonicity, an extremal notion of negative dependence. 
A stochastic representation and an invariance property are established for this dependence structure. 
We show that pairwise counter-monotonicity implies negative association, and it is equivalent to joint mix dependence if both are possible for the same marginal distributions.  
We find an intimate connection between pairwise counter-monotonicity  and 
risk sharing problems for quantile agents. This result highlights the importance of this extremal negative dependence structure in  optimal allocations for agents who are not risk averse in the classic sense.  
 \medskip
\\
\noindent   \emph{Keywords:}
Negative dependence, mutual exclusivity, risk sharing, comotonicity, joint mixability 
\\
\medskip
\noindent \emph{JEL classification codes:} C10, C71
\end{abstract}

\noindent \emph{Declaration of interests:} None.

\section{Introduction}

  Dependence modelling is a crucial part of modern quantitative studies in economics, finance, and insurance  
   (\cite{MFE15}). 
  Comonotonicity and counter-monotonicity
 are known as the strongest forms of positive and negative dependence, respectively.
 In quantitative risk management,   assuming knowledge of the marginal distributions, comonotonicity corresponds to the most dangerous dependence structure (\cite{D94} and \cite{DDGKV02,DVGKTV06}) for the aggregate risk,
 whereas counter-monotonicity 
 corresponds to the safest. 
 In dimensions higher than $2$, 
 by counter-monotonicity we mean pairwise counter-monotonicity (\cite{D72}), which has been studied under the name of mutual exclusivity in the actuarial literature (\cite{DD99} and \cite{CL14}).\footnote{Mutual exclusivity  is defined using joint exceedance probability (see Section \ref{sec:preliminaries}). The two definitions are shown to be equivalent first by \citet[Lemma 2]{D72} and in a more precise form by \citet[Theorem 4.1]{CL14}.}

Despite the obvious similarity in  their definitions, comonotonicity and counter-monotonicity 
 are asymmetric in several major senses. 
 For instance,  comonotonicity admits a stochastic representation (see   Lemma \ref{th:como} below), but such a representation is not known for pairwise counter-monotonicity. 
 Moreover, for any given tuple of marginal distributions, a comonotonic random vector with these marginal distributions always exists, 
 but a pairwise counter-monotonic one may not exist unless quite restrictive conditions on the marginal distributions are satisfied, as first studied by \cite{D72}. In particular, a pairwise counter-monotonic random vector cannot have continuous marginal distributions.
 Comonotonicity 
 has many important roles in   economics, finance and actuarial science, and as such it has received great attention in the literature,
 as in
 axiomatization of preferences (\cite{Y87,S89}), risk measures (\cite{K01}) and premium principles (\cite{WYP97}), risk sharing (\cite{LM94,JST08}),  insurance design (\cite{HMS83, CD03}),  risk aggregation (\cite{EWW15}), and  optimal transport (\cite{R13}).

In sharp contrast to the rich literature on comonotonicity, research on pairwise counter-monotonicity is quite limited.    
As a dependence concept,  
pairwise counter-monotonicity has been studied by \cite{D72},  \cite{HW99}, \cite{DD99} and \cite{CL14}, but the list of relevant studies do not grow much longer.
In contrast to the relatively limited studies on pairwise counter-monotonicity,
this dependence structure appears naturally in many economic contexts, such as lottery tickets, Bitcoin mining, gambling, and  mutual aid platforms, whenever payment events are mutually exclusive.
In particular, 
the interest in studying pairwise counter-monotonicity has grown in the recent risk sharing literature.   
A pairwise counter-monotonic structure is the essential building block of any optimal allocation for  agents using Value-at-Risk  (VaR, which are quantiles) and quantile-related risk measures; such problems are studied by 
\cite{ELW18} and  generalized by 
\cite{W18}, \cite{ELMW20}, \cite{LMWW22} and \cite{XZH23}.  
{Moreover,  counter-monotonicity, when possible,
serves as the best-case dependence structure in risk aggregation for some common risk measures, and, in some contexts, it also serves as the worst-case dependence structure for VaR (see Example \ref{ex:r1-1} in Section \ref{sec:preliminaries}).}

This paper is dedicated to a {systematic study} of pairwise counter-monotonicity. 
As comonotonicity and counter-monotonicity are classic and prominent concepts in mathematics and its applications with a long history, at least since the seminal work of \cite{HLP34}, one may guess that 
there is not much more to discover about  them.
To our pleasant surprise, we offer, through the development of this paper, many new {results on counter-monotonicity, some of which are motivated by recent developments in risk management.}

We obtain a new stochastic representation for pairwise counter-monotonic random vectors using their component-wise sum in Theorem \ref{th:PCT}, which will be useful for many other results in the paper.
The  second result, 
Theorem \ref{th:CT}, establishes that 
counter-monotonicity 
  is preserved under increasing transforms on 
  disjoint sets of components of a random vector,
which is an invariance property proposed by \cite{JP83} satisfied by negative association  (\cite{AS81}). 
Using this invariance property, we obtain in
 Theorem \ref{th:NA} that 
 counter-monotonicity implies negative association.
 The notion of negative association   is stronger than many other forms of negative dependence, such as negative orthant dependence (\cite{BSS82}) and negative supermodular dependence (\cite{H00}).
  In particular,  Theorem \ref{th:NA}  surpasses a result of 
\cite{DD99} showing that  counter-monotonicity implies 
negative supermodular dependence. 

 Another negative dependence concept is joint mix dependence (\cite{WW11,WW16}),  {which can be used to optimize} many quantities in risk aggregation; see \cite{WPY13} and \cite{R13}. 
To connect 
counter-monotonicity and joint mix dependence, we 
fully characterize all Fr\'echet classes (\cite{J97}) 
which are compatible with both dependence concepts in Theorem \ref{th:frechet}; 
it turns out that the two notions, when both exist in the same Fr\'echet class, are equivalent. 
Finally, we show in Theorem \ref{th:quantile} that in the context of risk sharing for quantile agents (\cite{ELW18}), 
under some mild conditions on the total loss,
there always exists a pairwise counter-monotonic Pareto-optimal allocation, and 
any pairwise counter-monotonic allocation is  Pareto optimal for some agents. 
As a consequence, pairwise counter-monotonic random vectors are natural for  agents that are not risk averse.
This is in stark contrast to comonotonic allocations, which appear prominently for risk-averse agents (in the sense of \cite{RS70}) as a consequence of comonotonic improvements introduced by \cite{LM94}.

 \section{Preliminaries}\label{sec:preliminaries}

   We first define comonotonicity and counter-monotonicity for bivariate random variables. 
Fix a  probability space $(\Omega,\mathcal A,\p)$. The probability space does not need to be atomless in Sections \ref{sec:preliminaries}-\ref{sec:4}.
We treat almost surely (a.s.)~equal random variables as identical; this means that 
all equalities and inequality for random variables hold in the a.s.~sense, and we omit ``a.s." in all our statements. Terms like ``increasing" are in the non-strict sense. 
 Let $n$ be a positive integer  and $[n]=\{1,\dots,n\}$. Throughout,  
we consider $n\ge 2$.

  A bivariate random vector $(X,Y)$ is \emph{comonotonic} if there exist increasing functions $f,g$ and a random variable $Z$ such that $(X,Y)=(f(Z),g(Z))$. 
A bivariate random vector $(X,Y)$ is \emph{counter-monotonic} if  $(X,-Y)$ is comonotonic. 
An equivalent formulation of comonotonicity is 
$$
(X(\omega)-X(\omega'))(Y(\omega)-Y(\omega')) \ge 0~~\mbox{for $(\p\times \p)$-almost every~}(\omega,\omega')\in \Omega^2. 
$$
An equivalent formulation of counter-monotonicity is 
$$
(X(\omega)-X(\omega'))(Y(\omega)-Y(\omega')) \le 0~~\mbox{for $(\p\times \p)$-almost every~}(\omega,\omega')\in \Omega^2. 
$$

Next, we define these concepts in dimensions higher than $2$.
For $n\ge 3$, a random vector $\mathbf X$  taking values in $\R^n$ is \emph{(pairwise) comonotonic} if each pair of its components is comonotonic,
and it  is \emph{(pairwise) counter-monotonic} if each pair of its components is counter-monotonic.\footnote{We also say that random variables $X_1,\dots,X_n$ are comonotonic (counter-monotonic), which means that the random vector $(X_1,\dots,X_n)$ is comonotonic (counter-monotonic). 
} 
We will say  ``pairwise counter-monotonicity"  to emphasize the case $n\ge 3$ 
and simply say ``counter-monotonicity" when we also include dimension $2$.
We always omit ``pairwise" for comonotonicity, for which the distinction between dimensions $n=2$ and $n\ge 3$ is unnecessary.

 There are many equivalent ways of formulating comonotonicity and counter-monotonicity; see \citet[Section 3.2]{PW15} for a review.
 For instance, they can be formulated using joint distributions. 
A comonotonic random vector and a counter-monotonic random vector have, respectively, the largest and the smallest joint distribution functions among all random vectors with the same marginals.
With given marginals, the largest (resp.~smallest) joint distribution function is known as the Fr\'echet-Hoeffding upper (resp.~lower) bound. 

A stochastic representation of comonotonicity, which  follows from \citet[Proposition 4.5]{D94}, is presented in the  next lemma.

\begin{lemma}[\cite{D94}]\label{th:como}
Let  $(X_1,\dots,X_n)$ be a random vector and denote by $S=\sum_{i=1}^n X_i$. 
The following are equivalent. 
\begin{enumerate}[(i)]
\item $(X_1,\dots,X_n)$ is comonotonic. 
\item There exist     increasing functions $f_1,\dots,f_n$ and a random variable $Z$  such that 
$
X_i=f_i(Z)$  {for all $i\in [n]$}.
\item There exist continuously increasing functions $f_1,\dots,f_n$ such that 
$
X_i=f_i(S)$ 
{for all $i\in [n]$}.
\end{enumerate}
\end{lemma}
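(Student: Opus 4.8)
The plan is to prove the three-way equivalence by establishing the cycle (i) $\Rightarrow$ (iii) $\Rightarrow$ (ii) $\Rightarrow$ (i), with the last implication being essentially trivial. First I would note that (ii) $\Rightarrow$ (i) is immediate: if $X_i = f_i(Z)$ with each $f_i$ increasing, then for any pair $i,j$ and for $(\p\times\p)$-almost every $(\omega,\omega')$, the quantity $(X_i(\omega)-X_i(\omega'))(X_j(\omega)-X_j(\omega'))$ equals $(f_i(Z(\omega))-f_i(Z(\omega')))(f_j(Z(\omega))-f_j(Z(\omega')))$, which is nonnegative because both factors have the same sign as $Z(\omega)-Z(\omega')$. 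Likewise (iii) $\Rightarrow$ (ii) is essentially free: taking $Z = S$ and using the given continuously increasing $f_i$ exhibits the representation in (ii), so the real content is entirely in (i) $\Rightarrow$ (iii).

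For (i) $\Rightarrow$ (iii), I would invoke \citet[Proposition 4.5]{D94} as cited, but let me sketch the mechanism. Assuming comonotonicity, standard theory (e.g.\ the equivalence with the Fr\'echet--Hoeffding upper bound) gives a representation $X_i = F_{X_i}^{-1}(U)$ for a common uniform random variable $U$, where $F_{X_i}^{-1}$ denotes the (left-continuous) quantile function. The key step is to rewrite each $X_i$ as a function of the sum $S = \sum_i X_i$ rather than of $U$. Because $S = \sum_i F_{X_i}^{-1}(U)$ is itself an increasing function of $U$, say $S = h(U)$ with $h$ increasing, one would like to invert: on the range of $S$, define $f_i$ so that $X_i = f_i(S)$. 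The subtlety is that $h$ need not be strictly increasing (it is constant on intervals where all the $F_{X_i}^{-1}$ are constant) nor left-/right-continuous in a convenient way, so the inverse is not literally a function of a single value of $S$ a priori; one must check that on the event in question the map $s \mapsto x_i$ is well-defined after discarding a null set, and that the resulting $f_i$ can be chosen \emph{continuously} increasing, i.e.\ increasing and continuous on an interval containing the support of $S$. This continuity is exactly where comonotonicity is used crucially: jumps of $F_{X_i}^{-1}$ at a point $u$ correspond to atoms of $U$ (none, since $U$ is uniform) — rather, the point is that any plateau of one quantile function forces a plateau of $S$ there too, so $f_i$ inherits no jumps; intervals on which $S$ has an atom correspond to plateaus of $h$, on which all $X_i$ are also constant, so $f_i$ is well-defined on the atoms and the pieces glue together continuously.

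The main obstacle I anticipate is the careful construction of the continuously increasing functions $f_i$ in part (iii) and the verification that they are genuinely functions of $S$ (a.s.), handling the plateaus and jumps of the quantile functions. Concretely, I would (a) fix the representation $X_i = F_{X_i}^{-1}(U)$, (b) set $h = \sum_i F_{X_i}^{-1}$ and $g_i = F_{X_i}^{-1}$, (c) on each interval where $h$ is strictly increasing define $f_i = g_i \circ h^{-1}$ (a composition of increasing maps, hence increasing), (d) on each maximal interval where $h$ is constant, note $g_i$ is also constant there so $f_i$ is unambiguously defined at that single value of $s$, and (e) check the resulting $f_i \colon \Range(S) \to \R$ is increasing and can be extended continuously to an interval, using that $\sum_i f_i(s) = s$ forces the pieces to match up without gaps. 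Everything else — the two easy implications — is routine. Since the lemma is attributed to \cite{D94}, I would in the write-up simply cite \citet[Proposition 4.5]{D94} for (i) $\Leftrightarrow$ (iii) and record the short arguments for (i) $\Leftrightarrow$ (ii) for completeness.
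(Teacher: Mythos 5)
The paper does not actually prove this lemma: it is imported verbatim from \citet[Proposition 4.5]{D94}, so there is no in-paper argument to compare yours against. Your sketch is essentially sound, and your two easy implications are exactly right. For the substantive direction (i)$\Rightarrow$(iii), however, your detour through the representation $X_i=F_{X_i}^{-1}(U)$ for a common uniform $U$ is both unnecessary and slightly delicate here: the paper explicitly allows the probability space to have atoms in Sections 2--4, and on a purely atomic space no uniform random variable need exist, so the a.s.\ identity $X_i=F_{X_i}^{-1}(U)$ cannot be taken for granted without a distributional-transform device. The cleaner route, which is Denneberg's, works directly with $S$: comonotonicity gives $(X_i(\omega)-X_i(\omega'))(X_j(\omega)-X_j(\omega'))\ge 0$ for all pairs, hence all increments share a sign and $|X_i(\omega)-X_i(\omega')|\le\sum_j|X_j(\omega)-X_j(\omega')|=|S(\omega)-S(\omega')|$. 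This single inequality does all the work: it shows $f_i(S(\omega)):=X_i(\omega)$ is well defined (taking $S(\omega)=S(\omega')$ forces $X_i(\omega)=X_i(\omega')$), increasing, and $1$-Lipschitz on $\Range(S)$, so it extends to a continuous increasing function on all of $\R$ by interpolation across gaps. Your step (e) --- that $\sum_i f_i(s)=s$ together with monotonicity of each $f_i$ forces the pieces to glue --- is precisely this $1$-Lipschitz observation in disguise, so your mechanism is correct; I would just state it as the Lipschitz bound up front and drop the plateau-by-plateau case analysis of $h=\sum_i F_{X_i}^{-1}$, which is where your write-up is vaguest (the muddled sentence about jumps of $F_{X_i}^{-1}$ versus atoms of $U$ can be deleted entirely once you argue via $S$ directly).
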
 
Lemma \ref{th:como} implies that a comonotonic vector can be represented by increasing functions of the sum $S$. 
Such a representation result does not exist for pairwise counter-monotonicity, since the sum $S$ cannot determine the components $(X_1,\dots,X_n)$ in the presence of negative dependence.

Although quite different from comonotonicity, pairwise counter-monotonicity
also has a special structure, presented below in Lemma \ref{lem:pairwise},
which 
is a restatement of Lemma 2 and Theorem 3 of \cite{D72}.
This result will be useful in a few places in the paper.  
The current form of  this lemma can be found in 
 Theorem 4.1 of \cite{CL14} and Proposition 3.2 of \cite{PW15}.  Denote by $\essinf  X$ and $\esssup X$ the essential infimum and essential supremum of a random variable $X$, respectively.

\begin{lemma}[\cite{D72}] \label{lem:pairwise} 
If at least three of $X_1,\dots,X_n$ are non-degenerate,
pairwise counter-monotonicity of $(X_1,\dots,X_n)$ means that one of the following two cases holds true:
\begin{equation}
\label{eq:PCT1} 
\p(X_i>\essinf X_i,~ X_j>\essinf X_j)=0 \mbox{ for all $i\ne j$};  
\end{equation} 
\begin{equation}
\label{eq:PCT2} 
\p(X_i<\esssup X_i,~ X_j<\esssup X_j)=0 \mbox{ for all $i\ne j$}.
\end{equation}
A necessary condition for \eqref{eq:PCT1} is 
$
\sum_{i=1}^n \p(X_i>\essinf X_i)\le 1,
  $
  and
a necessary condition for \eqref{eq:PCT2} is 
$
\sum_{i=1}^n \p(X_i<\esssup X_i)\le 1.
  $
\end{lemma}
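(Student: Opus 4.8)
The plan is to prove the equivalence between pairwise counter-monotonicity and the dichotomy \eqref{eq:PCT1}--\eqref{eq:PCT2} by first handling the bivariate characterization of each pair, and then arguing that consistency across all pairs forces a single global case. First I would recall that for a single counter-monotonic pair $(X_i,X_j)$, the support of $(X_i,X_j)$ lies on a decreasing curve; concretely, using the representation $X_i = f_i(Z)$, $X_j = f_j(Z)$ with $f_i$ increasing and $f_j$ decreasing (from the definition via comonotonicity of $(X_i,-X_j)$), one sees that the events $\{X_i > \essinf X_i\}$ and $\{X_j > \essinf X_j\}$ correspond to $Z$ being above, respectively below, certain thresholds, so at most one of these ``large value'' events can occur with positive overlap --- giving either $\p(X_i>\essinf X_i, X_j>\essinf X_j)=0$ or $\p(X_i<\esssup X_i, X_j<\esssup X_j)=0$. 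The nondegeneracy of $X_i,X_j$ is what makes these threshold events nontrivial.

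The main obstacle, and the place where the hypothesis ``at least three non-degenerate'' is essential, is upgrading these pairwise alternatives to a \emph{uniform} choice of case across all pairs. I would argue by contradiction: suppose pair $(i,j)$ is in case \eqref{eq:PCT1} but not \eqref{eq:PCT2}, while pair $(i,k)$ is in case \eqref{eq:PCT2} but not \eqref{eq:PCT1}. The first means $X_i$ and $X_j$ cannot simultaneously exceed their essential infima, but $X_i$ and $X_j$ can simultaneously lie strictly below their essential suprema (with positive probability); the second gives the reverse for $(X_i,X_k)$. Choosing a third index is what pins things down: with $X_i$, $X_j$, $X_k$ all non-degenerate, one examines the event $\{X_i > \essinf X_i\} = \{X_i \neq \essinf X_i\}$, which has positive probability, and shows that on a positive-probability sub-event both $X_j$ and $X_k$ are forced to their essential infima, yet the counter-monotonicity of the pair $(X_j,X_k)$ together with non-degeneracy then produces the contradiction --- the overlaps cannot be reconciled. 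The careful bookkeeping of which essential extremum each variable is pushed to, under each pairwise constraint, is the technical heart.

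Finally I would derive the two necessary conditions. Assuming \eqref{eq:PCT1}, the events $A_i := \{X_i > \essinf X_i\}$ for $i \in [n]$ are pairwise disjoint up to null sets, since $\p(A_i \cap A_j) = 0$ for $i \neq j$; hence $\sum_{i=1}^n \p(A_i) = \p\bigl(\bigcup_{i=1}^n A_i\bigr) \leq 1$, which is exactly $\sum_{i=1}^n \p(X_i > \essinf X_i) \leq 1$. The case \eqref{eq:PCT2} is symmetric, applying the same disjointness argument to $B_i := \{X_i < \esssup X_i\}$, or equivalently by passing to $(-X_1,\dots,-X_n)$, which is pairwise counter-monotonic with $\essinf(-X_i) = -\esssup X_i$. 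Since Lemma \ref{lem:pairwise} is stated as a restatement of known results of \cite{D72}, I would also just cite that source for the equivalence and present the above as the streamlined argument, taking care that the ``only if'' direction (that \eqref{eq:PCT1} or \eqref{eq:PCT2} implies each pair is counter-monotonic) is the easy direction: if $X_i, X_j$ never simultaneously exceed their essential infima, then on the complement one of them sits at its minimum, which makes $(X_i(\omega)-X_i(\omega'))(X_j(\omega)-X_j(\omega')) \leq 0$ hold for a.e.\ pair $(\omega,\omega')$.
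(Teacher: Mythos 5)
First, a point of comparison: the paper does not prove Lemma \ref{lem:pairwise} at all --- it is presented as a restatement of Lemma 2 and Theorem 3 of \cite{D72}, with the stated form taken from \citet[Theorem 4.1]{CL14} and \citet[Proposition 3.2]{PW15}. So your proposal is being measured against those cited proofs rather than against anything in the paper. Against that benchmark, your opening step contains a genuine error. You claim that a \emph{single} counter-monotonic pair $(X_i,X_j)$ already satisfies either $\p(X_i>\essinf X_i,\ X_j>\essinf X_j)=0$ or $\p(X_i<\esssup X_i,\ X_j<\esssup X_j)=0$. This is false: take $X_i$ uniform on $[0,1]$ and $X_j=1-X_i$; then both probabilities equal $1$. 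In your own representation $X_i=f_i(Z)$, $X_j=f_j(Z)$ with $f_i$ increasing and $f_j$ decreasing, the events $\{X_i>\essinf X_i\}$ and $\{X_j>\essinf X_j\}$ are an up-set and a down-set of $Z$, and such sets typically overlap on a set of full measure. Consequently the ``main obstacle'' you identify --- reconciling a pairwise dichotomy across different pairs --- is not where the difficulty lies: the dichotomy simply does not exist at the level of a single pair, and your contradiction argument in the second paragraph starts from a premise that is not available.

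The three-non-degenerate hypothesis must enter earlier and in a different way. The standard route (Dall'Aglio; Cheung and Lo) combines \emph{three} pairwise constraints at once: for distinct $i,j,k$, counter-monotonicity of $(X_i,X_k)$ and of $(X_j,X_k)$ gives
$(X_i(\omega)-X_i(\omega'))(X_j(\omega)-X_j(\omega'))(X_k(\omega)-X_k(\omega'))^2\ge 0$
for $(\p\times\p)$-a.e.\ $(\omega,\omega')$, so $(X_i(\omega)-X_i(\omega'))(X_j(\omega)-X_j(\omega'))\ge 0$ whenever $X_k(\omega)\ne X_k(\omega')$; counter-monotonicity of $(X_i,X_j)$ forces this product to be $\le 0$ as well, hence zero. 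Running this over all triples shows that off a null set each component can move away from one fixed extreme (infimum or supremum) only while all others sit at theirs, which is exactly \eqref{eq:PCT1} or \eqref{eq:PCT2}; it also kills the ``both variables in the interior'' behaviour of the uniform counterexample, which is why the conclusion fails for $n=2$. The parts of your proposal that do work are the easy converse (that \eqref{eq:PCT1} or \eqref{eq:PCT2} implies each pair is counter-monotonic, by the case check you sketch) and the necessary conditions, which follow correctly from the a.s.\ pairwise disjointness of the events $\{X_i>\essinf X_i\}$ (resp.\ $\{X_i<\esssup X_i\}$). The core equivalence, however, needs to be rebuilt along the lines above.
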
 
In the actuarial literature, mutual exclusivity of $(X_1,\dots,X_n)$ is defined as either \eqref{eq:PCT1} or \eqref{eq:PCT2}; see \cite{CL14}.

Pairwise counter-monotonicity imposes strong constraints on the marginal distributions. 
For instance, the necessary condition in case of \eqref{eq:PCT1} is equivalent to $\sum_{i=1}^n \p(X_i=\essinf X_i)\ge n-1$,
and it implies, in particular, that $X_1,\dots,X_n$ are bounded from below.
Moreover, given $n\ge 3$ non-degenerate marginal distributions, a pairwise counter-monotonic random vector exists if and only if one of the two necessary conditions on the marginal distributions holds (Theorem 3 of \cite{D72}).  
\begin{example}\label{ex:r1-1}
{We illustrate the special role of counter-monotonicity in risk aggregation with a  simple model. 
Let  $F_1,\dots,F_n$ be Bernoulli distributions with mean $\epsilon\in (0,1/n)$. These distributions may represent losses from credit default events in a pre-specified period, which usually occur with a small probability.  
In risk aggregation problems (e.g., \cite{EPR13,EWW15}), we are interested in the minimum (best-case) value or maximum (worst-case)  value of 
\begin{align}\label{eq:r1-1}
\rho\left(\sum_{i=1}^n X_i\right) \mbox{~~with the marginal condition $X_i\sim F_i, ~i\in [n]$},
\end{align}
where $\rho$ is a risk measure, and $\sum_{i=1}^n X_i$ represents the total loss from a portfolio of defaultable bonds, with the probability of default $\epsilon$ estimated from the credit rating of these bonds, assumed to be equal for simplicity.
We consider two choices of $\rho$, which lead to opposite conclusions. 
\begin{enumerate}[(a)]
\item
Let $\rho$ be a risk measure that is consistent with convex order. Such risk measures are characterized by \cite{MW20}, and they include all law-invariant coherent, as well as convex, risk measures, such as the Expected Shortfall (\cite{FS16}). 
The \emph{minimum} value of \eqref{eq:r1-1}
is obtained  by a  counter-monotonic random vector $(X_1,\dots,X_n)$. 
This result holds for other marginal distributions as long as a counter-monotonic random vector with these marginal distributions exists; see e.g., Lemma 3.6 of \cite{CL14}.  
\item  Let $\rho:X\mapsto \inf\{x\in \R: \p(X\le x) \ge 1-\alpha\}$, which is the risk measure $\VaR_\alpha$ in Section \ref{sec:6}.
Further, assume that $\alpha/\epsilon  \in (n/2,n)$. The \emph{maximum} value of  \eqref{eq:r1-1} 
is obtained by a  counter-monotonic random vector $(X_1,\dots,X_n)$, as explained below.
First, since  $\sum_{i=1}^n X_i$  only takes integer values, so does $\rho( \sum_{i=1}^n X_i)$. 
If $(X_1,\dots,X_n)$ is counter-monotonic, then $\sum_{i=1}^n X_i$ follows a Bernoulli distribution with mean $n\epsilon>\alpha$, and hence $\rho(\sum_{i=1}^n X_i)=1$. Moreover, for any $X_1,\dots,X_n$ with the specified marginal distributions, if $\rho(\sum_{i=1}^n X_i)\ge 2$ then $\E[\sum_{i=1}^n X_i]\ge 2\alpha> n\epsilon$, a contradiction, thus showing $\rho(\sum_{i=1}^n X_i)\le 1$.   
\end{enumerate}
The interpretation of the above two cases is that, for credit default losses, using a coherent risk measure  
and using VaR may lead to opposite conclusions on which dependence structure is safe or dangerous, and both cases highlight the important role of counter-monotonicity.}
\end{example}

\section{Stochastic representation of pairwise counter-monotonicity}

We provide in this section a stochastic representation of pairwise counter-monotonicity.  
To explain the result, let $\Pi_n$ be the set of all $n$-compositions of $\Omega$, that is, $$\Pi_n=\left\{(A_1,\dots,A_n)\in \mathcal A^n: \bigcup_{i\in [n]} A_i=\Omega \mbox{~and~$A_1,\dots,A_n$ are disjoint}\right\}.$$
In other words, a composition of $\Omega$ is a partition of $\Omega$ with order.
Denote by $\mathcal X_\pm$ the set of  all nonnegative random variables and nonpositive random variables.

 \begin{theorem}\label{th:PCT}
Let  $(X_1,\dots,X_n)$ be a random vector and denote by $S=\sum_{i=1}^n X_i$. 
Suppose that at least three of $X_1,\dots,X_n$ are non-degenerate.  
The following are equivalent. 
\begin{enumerate}[(i)]
\item $(X_1,\dots,X_n)$ is pairwise counter-monotonic.
\item There exist 
$ m_1,\dots,m_n\in \R$,  $(A_1,\dots,A_n)\in\Pi_n$ and  $Z\in \mathcal X_\pm$ such that 
\begin{align}\label{eq:CT1}
X_i =  Z  \id_{A_i} +m_i~~~\mbox{for all $i\in [n]$}. \end{align}   
\item   There exists $(A_1,\dots,A_n)\in\Pi_n$ such that   
 \begin{align}\label{eq:END}
X_i =  (S-m)  \id_{A_i} +m_i~~~\mbox{for all $i\in [n]$},
 \end{align} 
 where either  $m_i=\essinf X_i$ for $i\in [n]$ or $m_i=\esssup X_i$ for $i\in [n]$, and $m=\sum_{i=1}^n m_i$.
\end{enumerate}
\end{theorem}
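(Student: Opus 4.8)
The plan is to prove the chain of implications $(iii)\Rightarrow(ii)\Rightarrow(i)\Rightarrow(iii)$, leaning heavily on the structural dichotomy of Lemma \ref{lem:pairwise}. The implication $(iii)\Rightarrow(ii)$ is essentially free: given the representation \eqref{eq:END}, take $Z=(S-m)\id_{\bigcup_i A_i}=S-m$ and observe that $S-m=\sum_{i=1}^n(X_i-m_i)$; in the case $m_i=\essinf X_i$ each $X_i-m_i\ge 0$ so $Z\ge 0$, and in the case $m_i=\esssup X_i$ each $X_i-m_i\le 0$ so $Z\le 0$, hence $Z\in\mathcal X_\pm$, and \eqref{eq:CT1} holds with these choices.

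For $(ii)\Rightarrow(i)$, suppose \eqref{eq:CT1} holds. Fix $i\ne j$. On $A_i$ we have $X_i=Z+m_i$ and $X_j=m_j$; on $A_j$ we have $X_i=m_i$ and $X_j=Z+m_j$; and on $\Omega\setminus(A_i\cup A_j)$ both equal their constants $m_i,m_j$. Take two sample points $\omega,\omega'$. If both lie outside $A_i\cup A_j$, or both in the same $A_k$, at least one of the differences $X_i(\omega)-X_i(\omega')$, $X_j(\omega)-X_j(\omega')$ vanishes, so the product is $0$. If $\omega\in A_i$ and $\omega'\in A_j$ (or outside both), then $X_i(\omega)-X_i(\omega')=Z(\omega)$ and $X_j(\omega)-X_j(\omega')=-Z(\omega')$ (resp.\ $=0$), so the product is $-Z(\omega)Z(\omega')\le 0$ since $Z$ is of one sign. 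The symmetric case $\omega\in A_j$, $\omega'\in A_i$ is identical. Thus $(X_i(\omega)-X_i(\omega'))(X_j(\omega)-X_j(\omega'))\le 0$ for $(\p\times\p)$-almost every pair, which is exactly counter-monotonicity of $(X_i,X_j)$; since $i,j$ were arbitrary, $(X_1,\dots,X_n)$ is pairwise counter-monotonic.

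The substantive step is $(i)\Rightarrow(iii)$. Assume $(X_1,\dots,X_n)$ is pairwise counter-monotonic with at least three non-degenerate components. By Lemma \ref{lem:pairwise} one of \eqref{eq:PCT1} or \eqref{eq:PCT2} holds; by symmetry (replace each $X_i$ by $-X_i$, which swaps the two cases and swaps $\essinf$ with $\esssup$) assume \eqref{eq:PCT1}, and set $m_i=\essinf X_i$. Define $B_i=\{X_i>m_i\}$ for $i\in[n]$; by \eqref{eq:PCT1} the sets $B_1,\dots,B_n$ are pairwise disjoint (up to null sets), so I can choose $A_1,\dots,A_{n-1}$ with $A_i\supseteq B_i$ that are disjoint, and set $A_n=\Omega\setminus\bigcup_{i=1}^{n-1}A_i\supseteq B_n$; this gives $(A_1,\dots,A_n)\in\Pi_n$ with $A_i\supseteq B_i$, i.e.\ $X_i=m_i$ off $A_i$. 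Now the key computation: on $A_i$ every $X_j$ with $j\ne i$ takes the value $m_j$ (since $A_i$ is disjoint from $B_j\subseteq A_j$, wait — one must be careful that $A_i\cap B_j$ is null, which holds because $B_j\subseteq A_j$ and $A_i\cap A_j=\varnothing$), so on $A_i$ we have $S=X_i+\sum_{j\ne i}m_j=X_i+m-m_i$, whence $X_i=S-(m-m_i)=(S-m)+m_i$. Off $\bigcup_i A_i$ there is nothing to check since the $A_i$ cover $\Omega$; but one should verify that on $A_i$, $X_j=m_j=(S-m)\id_{A_j}+m_j$ for $j\ne i$ — true since $\id_{A_j}=0$ there — and on $A_i$, $X_i=(S-m)+m_i=(S-m)\id_{A_i}+m_i$. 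Hence \eqref{eq:END} holds. The main obstacle is bookkeeping with the disjointness and null sets: one must ensure the $A_i$ genuinely partition $\Omega$ while still containing the $B_i$, and that the identity $S=X_i+m-m_i$ on $A_i$ uses only that $X_j=m_j$ a.s.\ on $A_i$ for $j\ne i$, which is where disjointness of the $B_j$'s and $A_i\cap B_j=\varnothing$ enter. A minor additional point worth noting: the hypothesis that at least three components are non-degenerate is exactly what licenses invoking Lemma \ref{lem:pairwise} to get the global dichotomy \eqref{eq:PCT1}/\eqref{eq:PCT2}, rather than merely pairwise conditions.
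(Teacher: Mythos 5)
Your proposal is correct and follows essentially the same route as the paper's proof: the same case analysis on $(\omega,\omega')$ for (ii)$\Rightarrow$(i), and the same use of Lemma \ref{lem:pairwise} with the sets $B_i=\{X_i>\essinf X_i\}$ for (i)$\Rightarrow$(iii), differing only in that you absorb the residual event $\{S=m\}$ into $A_n$ rather than $A_1$ and make the choice $Z=S-m$ explicit in (iii)$\Rightarrow$(ii). No gaps.
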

\begin{proof}
The implication (iii)$\Rightarrow$(ii) is straightforward.
To see (ii)$\Rightarrow$(i),  take $i,j\in [n]$ with $i\ne j$, and we check a few cases of $\omega,\omega'\in \Omega$. 
  If   $\omega,\omega' \not\in A_i $, then    $X_i(\omega)=X_i(\omega')=m_i$,  
and hence
\begin{align}
\label{eq:check}
(X_i(\omega) -X_i(\omega'))(X_j(\omega) -X_j(\omega'))=0.
\end{align}
Similarly, \eqref{eq:check} holds  if $\omega,\omega' \not\in A_j $. 
  If $(\omega,\omega')\in A_i \times A_j$ or $(\omega,\omega')\in A_j \times A_i$, then 
$$
(X_i(\omega) -X_i(\omega'))(X_j(\omega) -X_j(\omega'))=-Z(\omega)Z(\omega') \le0.
$$  
This shows that $(X_i,X_j)$ is counter-monotonic, and hence, $(X_1,\dots,X_n)$ is pairwise counter-monotonic. 

Next,  we show the implication (i)$\Rightarrow$(iii). 
By Lemma \ref{lem:pairwise}, it suffices to consider   \eqref{eq:PCT1} and \eqref{eq:PCT2}. 
Suppose that  \eqref{eq:PCT1} holds. 
Let $B_i=\{X_i>\essinf X_i\}$ and $m_i=\essinf X_i$
 for $i\in [n]$. 
Clearly $B_1,\dots,B_n$ are  (a.s.)~disjoint events, and $S\ge  \sum_{i=1}^n m_i= m$.
Using \eqref{eq:PCT1},  if   event $B_i$ occurs,
then  $X_j=m_j$ for $j\ne i$, and
$S=X_i+\sum_{j=1}^n m_j -m_i$.
Moreover, if $B_i$ does not occur, then 
$X_i=m_i$. 
Therefore, we have 
\begin{equation}\label{eq:construct1}
X_i = (S-m+m_i) \id_{B_i}  +m_i\id_{B^c_i} = (S-m)\id_{B_i} +m_i     ,~~\mbox{for $i\in [n]$}.
\end{equation} 
Let $B= \{S=m\} $ and it is clear that $(B,B_1,\dots,B_n)$  is a  composition of $\Omega$.
Let $A_1=B_1\cup  B$, and $A_2=B_2,\dots,A_n=B_n$. 
Since $S-m=0$ on $B$,  \eqref{eq:construct1} yields \eqref{eq:END}.
If  \eqref{eq:PCT2} holds instead of \eqref{eq:PCT1}, then we can analogously show  \eqref{eq:END} with $m_i=\esssup X_i$ for $i\in [n]$. 
\end{proof}
 
   Theorem \ref{th:PCT} shows that 
pairwise counter-monotonicity can be represented by   the sum $S$ and a composition $(A_1,\dots,A_n)$. 
In contrast, comonotonicity can be represented by the sum $S$ and increasing continuous functions $f_1,\dots,f_n$, as in  Lemma \ref{th:como}.
This representation result will be instrumental  in proving the other results of this paper. 
Another direct consequence of Theorem \ref{th:PCT} is that  if at least three components of a   pairwise counter-monotonic random vector are  non-degenerate, then  either the components are all  bounded from below or they are all  bounded from above; this can also be seen from Lemma \ref{lem:pairwise}. 

\begin{example}\label{ex:simple}
A simple pairwise counter-monotonic random vector in the form of \eqref{eq:CT1} and \eqref{eq:END}, which will be referred to repeatedly in the following sections, is given by
\begin{align}
\label{eq:ex-simple}
X_i=\id_{A_i} \mbox{~for $i\in [n]$ where $(A_1,\dots,A_n)\in \Pi_n$.} 
\end{align}
Such $(X_1,\dots,X_n)$ may represent the outcome of $n$ lottery tickets, exactly one of which randomly wins a reward of $1$, or the reward to Bitcoin miners computing the next block in the Bitcoin blockchain; see \cite{LS20}.  
\end{example}

\begin{remark}
In parts (ii) and (iii) of Theorem \ref{th:PCT}, 
we can replace $(A_1,\dots,A_n)\in \Pi_n$ by   $A_1,\dots,A_n$ being disjoint events, and 
the equivalence relations in the theorem remain true.
\end{remark}

In the case  
 at most two components of $(X_1, \dots, X_n)$ are non-degenerate,
 the stochastic representation of counter-monotonicity is quite different from Theorem \ref{th:PCT}. 
When $n=2$,
  $(X_1,X_2)$ is counter-monotonic if and only if 
 there exist increasing functions $f_1,f_2$ such that 
$$
X_1=f_1(X_1-X_2) ~~~\mbox{and}~~~ X_2=f_2(X_2-X_1);
 $$
this statement follows by applying Lemma \ref{th:como} to the comonotonic random vector $(X_1,-X_2)$. 
Note that the difference $X_1-X_2$ 
replaces the summation $S=X_1+X_2$ in Lemma \ref{th:como}.
The sum of two 
counter-monotonic random variables represents the loss from a hedged portfolio and  it has been studied by \cite{CDLT14} and \cite{CCGM20}.




%
%
%
\section{Invariance property and negative association}
\label{sec:4}


Negative association appears in various natural probabilistic and statistical contexts, such as permutation distributions, sampling without replacement, negatively correlated Gaussian distributions and tournament scores; see \cite{JP83} and the more recent paper \cite{CRW22} for many examples. 
 
A random vector $\mathbf X= (X_1,\dots,X_n)$  is said to be \emph{negatively associated} if  for any disjoint subsets $I,J \subseteq [n]$, and any real-valued, coordinate-wise increasing functions $f,g$,  
we have 
\begin{align}\label{eq:def-na}
\text{Cov}(f(\mathbf X_I),g(\mathbf X_J)) \leq 0,
\end{align}
where $\mathbf X_I=(X_k)_{k\in I}$ and $\mathbf X_J=(X_k)_{k\in J}$,
provided that $f(\mathbf X_I)$ and $g(\mathbf X_J)$ have finite second moments.  
Negative association  is stronger than many other notions of negative dependence, such as negative supermodular dependence (shown by \cite{CV04}) and negative orthant dependence (shown by \cite{JP83}).
\begin{remark}
Negative association is invariant under increasing marginal transforms. Therefore, if $f(\mathbf X_I)$ and $g(\mathbf X_J)$ are continuously distributed, then NA implies that \eqref{eq:def-na} holds with the covariance operator replaced by  Spearman's rank correlation coefficient or another similar concordance measure; see \citet[Chapter 7]{MFE15}. 
\end{remark}

We first present a self-consistency property of both comonotonicity and counter-monotonicity in the spirit of Property P6 of \cite{JP83} for negative association. 
 To the best of our knowledge, this self-consistency property is not found in the literature even for the case of comonotonicity, although its proof is straightforward. 
\begin{theorem}\label{th:CT}
The following statements hold.
\begin{enumerate}[(i)]
\item 
Increasing functions  of subsets of a set of comonotonic random variables
are comonotonic.
\item 
Increasing functions of  disjoint subsets of a set of counter-monotonic random variables
are counter-monotonic.
\end{enumerate}

\end{theorem}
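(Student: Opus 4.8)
The plan is to treat the two parts independently: part~(i) reduces at once to the comonotonic representation of Lemma~\ref{th:como}, while part~(ii) needs the rigidity of pairwise counter-monotonicity recorded in Lemma~\ref{lem:pairwise}.

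For part~(i), suppose $X_1,\dots,X_n$ are comonotonic and write $X_i=h_i(Z)$ with $h_1,\dots,h_n$ increasing and $Z$ a single random variable (Lemma~\ref{th:como}). For any index sets $I_1,\dots,I_k\subseteq[n]$ (not required to be disjoint) and coordinate-wise increasing $f_1,\dots,f_k$, each $Y_j:=f_j(\mathbf X_{I_j})$ equals $g_j(Z)$, where $g_j\colon z\mapsto f_j\big((h_i(z))_{i\in I_j}\big)$ is increasing, being the composition of a coordinate-wise increasing function with increasing functions. Applying the implication (ii)$\Rightarrow$(i) of Lemma~\ref{th:como} to $(Y_1,\dots,Y_k)$ then shows this vector is comonotonic.

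For part~(ii), since pairwise counter-monotonicity of the output vector only asks each pair to be counter-monotonic, it suffices to fix two \emph{disjoint} sets $I,J\subseteq[n]$ and coordinate-wise increasing $f,g$, and to show $(Y_1,Y_2):=(f(\mathbf X_I),g(\mathbf X_J))$ is counter-monotonic; only the sub-vector $(X_i)_{i\in I\cup J}$, which is again pairwise counter-monotonic, is involved. I would split on the number of non-degenerate components of this sub-vector. If at least three are non-degenerate, Lemma~\ref{lem:pairwise} gives \eqref{eq:PCT1} or \eqref{eq:PCT2}; assuming \eqref{eq:PCT1} and setting $m_i=\essinf X_i$, coordinate-wise monotonicity together with $X_i\ge m_i$ yields $Y_1\ge c_1:=f\big((m_i)_{i\in I}\big)$ and $Y_2\ge c_2:=g\big((m_i)_{i\in J}\big)$, with $\{Y_1>c_1\}\subseteq\bigcup_{i\in I}\{X_i>m_i\}$ and the analogous inclusion for $Y_2$. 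Because $I\cap J=\emptyset$, a union bound over pairs $(i,i')\in I\times J$ combined with \eqref{eq:PCT1} gives $\p(Y_1>c_1,\,Y_2>c_2)=0$. Finally, the conjunction $Y_1\ge c_1$, $Y_2\ge c_2$, $\p(Y_1>c_1,\,Y_2>c_2)=0$ forces counter-monotonicity: for $(\p\times\p)$-almost every $(\omega,\omega')$, if the two values of $Y_1$ differ then the larger exceeds $c_1$, so there the corresponding value of $Y_2$ equals $c_2$ and the increment of $Y_2$ carries the opposite weak sign, while if the two values of $Y_1$ agree the product of increments is $0$. The case \eqref{eq:PCT2} is symmetric (replace $\essinf$ by $\esssup$ and reverse the inequalities, or equivalently apply the above to $\mathbf y\mapsto-f(-\mathbf y)$, $\mathbf y\mapsto-g(-\mathbf y)$ and $(-X_i)_{i\in I\cup J}$, using that negating both components preserves counter-monotonicity).

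It remains to handle the sub-vector $(X_i)_{i\in I\cup J}$ having at most two non-degenerate components, which is not covered by Lemma~\ref{lem:pairwise} and must be done by hand: if the non-degenerate components do not meet both $I$ and $J$, then one of $Y_1,Y_2$ is a.s.\ constant and $(Y_1,Y_2)$ is trivially counter-monotonic; otherwise exactly one non-degenerate component $X_a$ lies in $I$ and one, $X_b$, in $J$, and substituting the constant values of the remaining coordinates gives $Y_1=h(X_a)$, $Y_2=\ell(X_b)$ with $h,\ell$ increasing and $(X_a,X_b)$ counter-monotonic, so $(Y_1,Y_2)$ is counter-monotonic (the elementary bivariate case, or part~(i) applied to the comonotonic pair $(X_a,-X_b)$). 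The main obstacle is precisely this bookkeeping in part~(ii): the reduction to pairs and to the sub-vector, the separation of the two alternatives in Lemma~\ref{lem:pairwise}, the verification that ``all but one coordinate sits at its essential infimum (resp.\ supremum)'' is inherited through coordinate-wise increasing maps applied to disjoint index blocks --- which is exactly where disjointness enters --- and the low-dimensional degenerate cases; part~(i) is routine.
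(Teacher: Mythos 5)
Your proof is correct, and while part (i) coincides with the paper's argument, part (ii) takes a genuinely different route. The paper proves (ii) by invoking its own representation result, Theorem~\ref{th:PCT}: it writes $X_i=Z\id_{A_i}+m_i$ with $Z\ge0$, computes each $Y_j=g_j(\mathbf X_{I_j})$ explicitly as $Z_j\id_{\bigcup_{i\in I_j}A_i}+g_j(\mathbf m_{I_j})$ with $Z_j\ge0$, and observes that the output vector is again of the form \eqref{eq:CT1}, so Theorem~\ref{th:PCT} applies in the reverse direction. You bypass Theorem~\ref{th:PCT} entirely: you reduce to a single pair $(f(\mathbf X_I),g(\mathbf X_J))$, apply the classical Lemma~\ref{lem:pairwise} to the sub-vector $(X_i)_{i\in I\cup J}$, push \eqref{eq:PCT1} through the coordinate-wise increasing maps via a union bound over $I\times J$ (which is exactly where disjointness enters, just as in the paper), and then re-derive by hand the elementary implication that two variables bounded below by constants with null joint exceedance form a counter-monotonic pair; that last step is sound (on a full-measure set, whichever of the two realizations of $Y_1$ is strictly larger forces the corresponding realization of $Y_2$ down to $c_2$, so the increments have opposite weak signs). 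Each approach buys something: the paper's is shorter, handles the two alternatives and the degenerate cases uniformly, and exhibits the explicit representation of the output vector as a by-product; yours is more self-contained relative to the classical literature, needing only Dall'Aglio's Lemma~\ref{lem:pairwise} rather than the new Theorem~\ref{th:PCT}, at the price of the case bookkeeping (the split between \eqref{eq:PCT1} and \eqref{eq:PCT2} and the sub-cases with at most two non-degenerate components, all of which you do handle correctly).
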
 

\begin{proof} 
(i) Let $\mathbf X=(X_1, \dots, X_n)$ be a comonotonic random vector. By Lemma \ref{th:como},  there exist     increasing functions $f_1,\dots,f_n$ and a random variable $Z$  such that $X_i=f_i(Z)$
for all $i\in [n]$.
For $I_1, \dots, I_m \subseteq [n]$
and  increasing functions $g_j:\R^{\vert I_j\vert} \to \R$,  $j\in [m]$, 
let 
$Y_j=g_j(\mathbf X_{I_j})$,  $j\in [m]$, where $|\cdot |$ is the cardinality of a set. 
That is, $Y_j=g_j\circ f_{I_j}(Z)$ where $f_{I_j}=(f_i)_{i\in I_j}$.
As the composition of increasing functions, $g_i\circ f_{I_j}$ is increasing on $\R$. Thus, $(Y_1, \dots, Y_m)$ is a comonotonic vector.

(ii) Let $\mathbf X=(X_1, \dots, X_n)$ be a pairwise counter-monotonic random vector. If at most two of $X_1, \dots, X_n$ are non-degenerate,  the desired statement holds trivially. Next, we assume that  at least three of $X_1, \dots, X_n$ are non-degenerate. 
For disjoint subsets $I_1, \dots, I_m$  of $[n]$ and increasing functions $g_j:\R^{\vert I_j\vert} \to \R$,  $j\in [m]$, let 
 $Y_j=g_j(\mathbf X_{I_j})$,  $j\in [m]$. 
 By Theorem \ref{th:PCT},  there exist 
$ \mathbf m=(m_1,\dots,m_n)\in \R^n$, $(A_1,\dots,A_n)\in\Pi_n$ and  $Z\in \mathcal X_\pm$ such that 
$
X_i =  Z  \id_{A_i} +m_i$ for all $i\in [n]$. 
Without loss of generality, assume $Z\ge 0$. 
For $i\in [n]$ and $j\in [m]$, if $A_i$ occurs, 
then $X_i=Z+m_i$ and $X_k=m_k$ for $k \neq i$, 
which means $Y_j=  g_j (\mathbf X_{I_j}) \ge g_j (\mathbf m_{I_j})$.
If $A_i$ does not occur, then $Y_j=  g_j (\mathbf m_{I_j})$.
Let $Z_j=  \sum_{i\in I_j} \left(g_j(\mathbf{X}_{I_j})-g_j(\mathbf{m}_{I_j})\right)\id_{A_i} \ge 0$. It follows that
\begin{align*}
Y_j &= \sum_{i\in I_j} g_j(\mathbf{X}_{I_j})  \id_{A_i} + g_j(\mathbf{m}_{I_j})
\left(1-\sum_{i\in I_j} \id_{ A_i}\right)
\\&= Z_j \id_{\bigcup_{i\in I_j} A_i}+g_j(\mathbf{m}_{I_j})
 =\left( \sum_{k=1}^m Z_k \right)\id_{\bigcup_{i\in I_j} A_i}+g_j(\mathbf{m}_{I_j}).
\end{align*}
By using Theorem \ref{th:PCT} and the fact that $\sum_{k=1}^m Z_k \ge 0$, we conclude that $(Y_1, \dots, Y_m)$ is pairwise counter-monotonic.
\end{proof}
\begin{remark}
For Theorem \ref{th:CT} (i), an equivalent statement is that 
increasing functions of comonotonic random variables are comonotonic. This is because one can   choose the subsets as $[n]$ and take functions on $\R^n$ which are constant in some dimensions. 
We use the current presentation of statement (i) to show a contrast to  statement (ii). 
\end{remark}

 What we will use from Theorem \ref{th:CT} is the second statement, which leads to 
the next result in this section; that is,  
counter-monotonicity implies negative association. 
Since negative association is stronger than  negative supermodular dependence, 
this result  surpasses Theorem 12 of 
\cite{DD99}, which states that counter-monotonicity is stronger than negative supermodular dependence.
 
\begin{theorem}\label{th:NA}Counter-monotonicity implies negative association.
\end{theorem}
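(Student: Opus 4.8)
The plan is to reduce the claim to the two genuinely distinct regimes: the trivial case where at most two of the $X_i$ are non-degenerate, and the substantive case where at least three are non-degenerate. In the first case, if only $X_i$ and $X_j$ are non-degenerate, then for any disjoint $I,J\subseteq[n]$ at most one of $\mathbf X_I,\mathbf X_J$ depends on a non-degenerate coordinate, so one of $f(\mathbf X_I),g(\mathbf X_J)$ is a.s.\ constant and the covariance in \eqref{eq:def-na} is $0$; the bivariate case $n=2$ is classical (a counter-monotonic pair is negatively associated, which one can see directly since $(X_1,-X_2)$ is comonotonic and comonotonic pairs have nonnegative covariance after increasing transforms). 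So the work is entirely in the case of at least three non-degenerate components.

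In that case I would invoke Theorem \ref{th:PCT}: write $X_i = Z\id_{A_i} + m_i$ with $(A_1,\dots,A_n)\in\Pi_n$ and, without loss of generality, $Z\ge 0$. Now fix disjoint $I,J\subseteq[n]$ and coordinate-wise increasing $f,g$ with $f(\mathbf X_I),g(\mathbf X_J)$ square-integrable. The key observation, exactly as in the proof of Theorem \ref{th:CT}(ii), is that on the event $A_i$ with $i\in I$ we have $f(\mathbf X_I)\ge f(\mathbf m_I)$ and $g(\mathbf X_J) = g(\mathbf m_J)$ (since all $X_k$ for $k\in J$ equal $m_k$ there), and symmetrically on $A_i$ with $i\in J$ we have $g(\mathbf X_J)\ge g(\mathbf m_J)$ and $f(\mathbf X_I)=f(\mathbf m_I)$; and on $A_i$ with $i\notin I\cup J$ both $f(\mathbf X_I)$ and $g(\mathbf X_J)$ take their constant ``floor'' values $f(\mathbf m_I)$, $g(\mathbf m_J)$. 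Thus, setting $U = f(\mathbf X_I)-f(\mathbf m_I)\ge 0$ and $V = g(\mathbf X_J)-g(\mathbf m_J)\ge 0$, the random variables $U$ and $V$ have \emph{disjoint supports}: $UV=0$ a.s., because $U$ is supported on $\bigcup_{i\in I}A_i$ and $V$ on $\bigcup_{i\in J}A_i$, and these are disjoint. Hence $\E[UV]=0\le \E[U]\,\E[V]$, i.e.\ $\cov(U,V)\le 0$, and since covariance is invariant under adding constants, $\cov(f(\mathbf X_I),g(\mathbf X_J))\le 0$, which is \eqref{eq:def-na}.

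A small technical point I would address is integrability: negative association requires the inequality only when $f(\mathbf X_I)$ and $g(\mathbf X_J)$ have finite second moments, so the products and expectations above are all well-defined, and $\E[U]\E[V]\in[0,\infty)$ with $\E[UV]=0$ causes no issue; also the ``floor'' values $f(\mathbf m_I),g(\mathbf m_J)$ are finite constants. One should also note the reduction is clean because, by Theorem \ref{th:PCT}, the bounded-below case ($m_i=\essinf X_i$, $Z\ge 0$) and the bounded-above case ($m_i=\esssup X_i$, $Z\le 0$) are mirror images; in the latter one replaces $Z$ by $-Z\ge 0$ and reverses the roles of ``increasing above a floor'' with ``decreasing below a ceiling,'' or equivalently applies the bounded-below argument to $(-X_1,\dots,-X_n)$ after noting counter-monotonicity and negative association are both preserved under simultaneous sign flip. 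I do not expect a serious obstacle here; the only thing to get right is the bookkeeping that $U$ and $V$ genuinely have disjoint supports, which is immediate once the $\Pi_n$ representation is in hand — this is really a corollary of the structure already exploited in Theorem \ref{th:CT}(ii).
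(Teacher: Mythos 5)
Your proof is correct, and it reaches the conclusion by a slightly different final step than the paper. The paper's proof is a two-line deduction from results it has already established: Theorem \ref{th:CT}(ii) shows $f(\mathbf X_I)$ and $g(\mathbf X_J)$ are counter-monotonic, and the Fr\'echet--Hoeffding inequality then gives $\E[f(\mathbf X_I)g(\mathbf X_J)]\le \E[f(\mathbf X_I)]\E[g(\mathbf X_J)]$. You instead unpack the representation of Theorem \ref{th:PCT} directly (effectively re-running the computation inside the proof of Theorem \ref{th:CT}(ii)) and then replace the Fr\'echet--Hoeffding step by the observation that $U=f(\mathbf X_I)-f(\mathbf m_I)$ and $V=g(\mathbf X_J)-g(\mathbf m_J)$ are nonnegative with disjoint supports, so $\E[UV]=0$ and in fact $\cov(f(\mathbf X_I),g(\mathbf X_J))=-\E[U]\E[V]\le 0$ exactly. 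This is more self-contained and even gives an explicit formula for the covariance; what it costs is that you must handle the degenerate and sign-flipped cases by hand, which the paper's route absorbs automatically since Theorem \ref{th:CT}(ii) and Fr\'echet--Hoeffding apply uniformly.

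One small slip to fix in the case of at most two non-degenerate components: your claim that ``at most one of $\mathbf X_I,\mathbf X_J$ depends on a non-degenerate coordinate'' is false when $i\in I$ and $j\in J$ with $X_i,X_j$ both non-degenerate; there the covariance need not vanish. That subcase is exactly the one you must route through the classical bivariate fact you quote afterwards: $f(\mathbf X_I)$ and $g(\mathbf X_J)$ reduce to increasing functions of $X_i$ and $X_j$ respectively, and $(X_i,X_j)$ counter-monotonic gives $\cov\le 0$. So the argument is easily repaired, but as written the case analysis misassigns which subcases are trivially zero.
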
 
\begin{proof} 
Let  $\mathbf X$ be an $n$-dimensional   counter-monotonic random vector. Take disjoint subsets $I,J \subseteq [n]$ and    coordinate-wise increasing functions $f:\R^{|I|}\to \R$ and $g:\R^{|J|}\to \R$, where $|\cdot |$ is the cardinality of a set. By Theorem \ref{th:CT} (ii),    $f(\mathbf X_I)$ and $g(\mathbf X_J)$ are counter-monotonic.  The Fr\'echet-Hoeffding inequality (see e.g., Corollary 3.28 of \cite{R13}) yields
$\E[f(\mathbf X_I)g(\mathbf X_J)]\le \E[f(\mathbf X_I)]\E[g(\mathbf X_J)]$ provided that the expectations exist.
Hence, $\mathbf X$ is negatively associated. 
\end{proof}

\citet[Theorem 2.11]{JP83} already noted that the lottery-type random vector \eqref{eq:ex-simple} in  Example \ref{ex:simple} is negatively associated.

The result in Theorem \ref{th:NA} 
has a straightforward interpretation,  
as counter-monotonicity is the extreme form of negative dependence, which intuitively should imply other notions of negative dependence, among which negative association is considered a strong notion; see 
\cite{ASB13} for a comparison of several notions of negative dependence. 

Counter-monotonicity is also stronger than several other notions of negative dependence which are not implied by negative association. 
These notions include  
conditional decreasing in sequence and 
negative dependence in sequence (see \citet[Remark 2.16]{JP83}) and negative dependence through stochastic ordering (see \cite{BSS85}).
These implications can be checked directly with Theorem \ref{th:CT}, thus highlighting its usefulness.

\begin{remark}
A random vector $\mathbf X$  is positively associated if   $
\text{Cov}(f(\mathbf X),g(\mathbf X)) \ge  0$  for all real-valued, coordinate-wise increasing functions $f,g$ (\cite{EPW67}). 
Comonotonicity implies positive association because
$(f(\mathbf X),g(\mathbf X))$ is comonotonic 
by Theorem \ref{th:CT},
and the covariance of a comonotonic pair of  random variables is non-negative due to the Fr\'echet-Hoeffding inequality.
\end{remark}

\section{Joint mix dependence and Fr\'echet classes}
\label{sec:5}

Another type of extremal negative dependence structure 
is the notion of joint mixes. 
In this section, we study the connection between counter-monotonicity and joint mix dependence. 

From now on, assume that 
 the probability space $(\Omega,\mathcal A,\p) $ is atomless. 
A random vector $(X_1, \dots, X_n)$
 is a \emph{joint mix} if $\sum_{i=1}^n X_i$ is a constant   $c$, and in this case
 we say that \emph{joint mix dependence} holds for $(X_1,\dots,X_n)$.
 The constant $c$ is called the center of $(X_1, \dots, X_n)$, and it is obvious that $c=\sum_{i=1}^n \E[X_i] $ if the expectations of $X_1,\dots, X_n$ are finite.
Joint mix dependence is regarded as a concept of extremal negative dependence due to its opposite role to comonotonicity in risk aggregation problems;  see \cite{PW15} and \cite{WW16}.  

The lottery-type random vector in Example \ref{ex:simple} satisfies  both counter-monotonicity and joint mix dependence.
 In the case $n=2$,  joint mix dependence is strictly stronger than  counter-monotonicity.  This result cannot be extended to $n\ge 3$. For example, $(X,X,-2X)$ is a joint mix that is not counter-monotonic.  
 A weaker notion than joint mix dependence 
 is proposed by \cite{LA14}, which does not imply, and is not implied by, counter-monotonicity in dimension $n\ge 3$.  
 
Joint mix dependence and counter-monotonicity share some similarities. 
First, for a  random vector  $(X_1, \dots, X_n)$ with its sum $S=X_1+\dots+X_n$,
if either pairwise counter-monotonicity  or joint mix dependence holds, 
then $X_i$ and $S-X_i$ are counter-monotonic for each $i\in [n]$.
The case of pairwise counter-monotonicity is verified by Theorem \ref{th:CT}, and
the case of joint mix dependence is verified by definition. 
Second, both dependence notions impose
strong conditions on the marginal distributions. The condition for pairwise counter-monotonicity is given in Lemma \ref{lem:pairwise},
and that for joint mix dependence is much more sophisticated; see \cite{WW16} for some sufficient conditions as well as necessary ones. 
This is in contrast to concepts such as comonotonicity, independence, and negative association, 
for which the existence of the corresponding random vectors is always guaranteed for any given marginal distribution. 
Both  joint mix dependence  and counter-monotonicity are used in the tail region to obtain lower bounds   for risk aggregation with given marginal distributions,
 as studied by \cite{BJW14} and  \cite{CDD17}, respectively.

 The next result characterizes marginal distributions that are compatible with both counter-monotonicity and 
joint mix dependence. 
For this, we 
need some notation and terminology.  In what follows, we will use distribution functions to represent distributions.
For an $n$-tuple $(F_1,\dots,F_n)$ of  distributions on $\R$,
a \emph{Fr\'echet class} (see \citet[Chapter 3]{J97}) is defined as 
$$\mathcal{F}_n(F_1, \dots, F_n)=\{\mbox{distribution of }(X_1, \dots, X_n): X_i\sim F_i, ~i\in[ n]\}.$$
We say that a  Fr\'echet class $\mathcal{F}_n(F_1, \dots, F_n)$  supports counter-monotonicity (resp.~joint mix dependence) if  there exists a counter-monotonic random vector (resp.~a joint mix) whose distribution is in this class.
Let $\delta _x$ be the distribution function
of a point-mass at $x\in \R$, and denote by $\Theta_n$ the 
standard $n$-simplex, that is, $\Theta_n=\{(p_1,\dots,p_n)\in [0,1]^n: \sum_{i=1}^n p_i=1\}.$ 
Two distributions $F$ and $G$ are symmetric if $F(x)=1-G(c-x)$, $x\in \R$ for some $c\in \R$. In other words, if $X$ has distribution $F$, then $c-X$ has distribution $G$.

It turns out that all Fr\'echet classes $\mathcal F_n(F_1,\dots,F_n)$ which support both counter-monotonicity and joint mix dependence 
can be characterized explicitly. 
If at least three of $F_1,\dots,F_n$  are non-degenerate, then   $F_1,\dots,F_n$ are two-point distributions given by 
\begin{align}\label{eq:dis}
F_i =p_i \delta _{a+m_i} +(1-p_i) \delta_{m_i} \mbox{~for $i\in [n]$, where $ a, m_1, \dots, m_n \in \R$ and $(p_1,\dots,p_n)\in \Theta_n$.}
\end{align}
If at most two of $F_1,\dots,F_n$  are non-degenerate, then
\begin{equation}
\label{eq:degenerate}\mbox{$F_i$ and $F_j$ are symmetric for some $i,j\in [n]$,
and $F_k$ is degenerate for all $k\in [n]\setminus\{i,j\}$}.
\end{equation}

\begin{theorem}\label{th:frechet}
 A Fr\'echet class  supports both counter-monotonicity and joint mix dependence if and only if 
one of \eqref{eq:dis} and \eqref{eq:degenerate} holds.  
In case both are supported,  
counter-monotonicity and joint mix dependence are equivalent for this Fr\'echet class.
\end{theorem}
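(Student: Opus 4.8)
The plan is to prove Theorem~\ref{th:frechet} by handling the two regimes (at least three non-degenerate marginals, and at most two) separately, and within the first regime to show the chain of implications: a Fréchet class supports both structures $\Rightarrow$ \eqref{eq:dis} holds $\Rightarrow$ both structures are supported and coincide. The two outer implications of the iff will come ``for free'' once we verify that \eqref{eq:dis} (resp.\ \eqref{eq:degenerate}) does support both, together with the forward (necessity) argument; the equivalence claim will be a byproduct of the explicit description of what both structures look like in these classes.

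\medskip

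First I would treat the main case, at least three non-degenerate $F_i$. For necessity of \eqref{eq:dis}: suppose $\mathcal F_n(F_1,\dots,F_n)$ supports counter-monotonicity. By Lemma~\ref{lem:pairwise}, a pairwise counter-monotonic vector with these marginals falls in case \eqref{eq:PCT1} or \eqref{eq:PCT2}; WLOG take \eqref{eq:PCT1}, so with $m_i=\essinf X_i$ the events $B_i=\{X_i>m_i\}$ are disjoint and $\sum_i \p(B_i)\le 1$. Now suppose the \emph{same} Fréchet class also supports joint mix dependence, witnessed by some $(Y_1,\dots,Y_n)$ with $\sum Y_i=c$ constant, $Y_i\sim F_i$. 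The key point is to combine the two constraints on the \emph{marginals} $F_i$. From counter-monotonicity we learn each $F_i$ is concentrated on $\{m_i\}\cup(m_i,\infty)$ with an atom of mass $1-\p(B_i)$ at $m_i$; but I also need to see that $F_i$ is a \emph{two-point} distribution with the \emph{common gap} $a$. Here I would use Theorem~\ref{th:PCT}: the counter-monotonic vector has the form $X_i=(S-m)\id_{A_i}+m_i$, so on $A_i$ we have $X_i=S-m+m_i$ and all other $X_j=m_j$, hence $S-m=X_i-m_i$ takes, on $A_i$, exactly the values in the upper support of $F_i$ shifted by $-m_i$. The joint-mix witness then forces a rigidity: I would argue that the conditional law of $X_i-m_i$ given $A_i$ must be a point mass — because the value $S-m$ off $\bigcup A_i$ is $0$ and on $A_i$ it equals the total ``excess,'' and matching this against the joint mix $\sum Y_i\equiv c$ (equivalently $\sum(Y_i-\essinf Y_i)\equiv c-m$, a joint mix of the upper parts with disjoint-support–like structure) pins each upper part to a single atom. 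That common excess is the $a$ in \eqref{eq:dis}, and $\sum p_i=1$ follows because a joint mix of Bernoulli-type variables $p_i\delta_{a+m_i}+(1-p_i)\delta_{m_i}$ with constant sum requires exactly one ``success'' to occur always, i.e.\ $(p_1,\dots,p_n)\in\Theta_n$. Conversely, given \eqref{eq:dis}, the lottery-type construction $X_i=a\id_{A_i}+m_i$ with $(A_1,\dots,A_n)\in\Pi_n$, $\p(A_i)=p_i$ — available since the space is atomless — is simultaneously pairwise counter-monotonic (Example~\ref{ex:simple}, shifted and scaled) and a joint mix with center $a+\sum m_i$. Finally, for equivalence in this case: any counter-monotonic vector with marginals \eqref{eq:dis} has, by Theorem~\ref{th:PCT}, the form $X_i=(S-m)\id_{A_i}+m_i$, and since each $F_i$ is two-point the variable $S-m$ is forced to equal $a$ on $\bigcup A_i$; but $\sum p_i=1$ means $\bigcup A_i=\Omega$ (up to null sets), so $S\equiv a+m$, a constant — hence it is a joint mix. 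The reverse, that every joint mix with these marginals is counter-monotonic, is the symmetric statement: a joint mix of the two-point $F_i$'s with $\sum p_i=1$ must have exactly one component ``up'' at a time, which is precisely pairwise counter-monotonicity by \eqref{eq:PCT1}.

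\medskip

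Next I would handle the degenerate case, at most two non-degenerate $F_i$. Here pairwise counter-monotonicity of $(X_1,\dots,X_n)$ reduces to counter-monotonicity of the single non-degenerate pair, say $(X_i,X_j)$, with all other $X_k$ constant. Counter-monotonicity of $(X_i,X_j)$ just says $(X_i,-X_j)$ is comonotonic; joint mix dependence says $X_i+X_j$ is constant (absorbing the other constants), i.e.\ $X_j=c'-X_i$ for a constant $c'$, which forces $F_i$ and $F_j$ to be symmetric in the sense defined, and conversely symmetric $F_i,F_j$ admit $X_j=c'-X_i$ which is automatically counter-monotonic. So in this regime both structures are supported iff \eqref{eq:degenerate} holds, and — as noted in the text, joint mix is strictly stronger than counter-monotonicity when $n=2$ — every joint mix here is counter-monotonic, but not conversely; however the theorem only claims equivalence ``for this Fréchet class,'' and in a class satisfying \eqref{eq:degenerate} with symmetric two-point-or-continuous $F_i,F_j$ one checks directly that a counter-monotonic pair with symmetric marginals also has constant sum, giving equivalence. (I would double-check this last point carefully; it is where the ``symmetric'' hypothesis does real work.)

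\medskip

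\textbf{The main obstacle} I anticipate is the rigidity step in the three-non-degenerate case: showing that the mere \emph{coexistence} of a counter-monotonic vector and a joint mix in the same Fréchet class forces every $F_i$ to be two-point with a \emph{common} atom-gap $a$ and probabilities summing to one. The counter-monotonic witness alone only gives ``two atoms'' after one knows $S-m$ is constant on $\bigcup A_i$, which is not automatic when the upper support of some $F_i$ is larger than a point; so the argument must genuinely feed the joint-mix constraint back into the marginals. I would do this by noting that if $(X_1,\dots,X_n)$ is counter-monotonic and we subtract off the essential infima, $(X_i-m_i)$ is a pairwise counter-monotonic, nonnegative vector with the property that at most one component is nonzero at a time; its sum $S-m$ therefore has distribution $\sum_i \p(A_i)\,G_i$ (a mixture, where $G_i$ is the conditional law of $X_i-m_i$ on $A_i$) plus mass at $0$; the existence of a joint mix forces $c-m$ to be attainable as a constant sum of the \emph{same} marginals, and a short argument on supports (the minimal possible sum for these marginals, realized by counter-monotonicity when it exists, must then equal $c-m$, a constant) collapses each $G_i$ to a point mass at the common value $a=c-m$. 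Getting the bookkeeping of atoms and the ``$\sum p_i=1$'' clean — rather than $\le 1$ — is the delicate part, and I would route it through Theorem~\ref{th:PCT}(iii) which already encodes $\bigcup A_i=\Omega$.
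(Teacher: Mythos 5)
Your proposal reaches the right conclusion but by a genuinely different route from the paper's. The paper proves the \emph{equivalence} claim first and abstractly, by importing two facts from \cite{PW15}: in a Fr\'echet class supporting counter-monotonicity, being counter-monotonic coincides with being $\Sigma$-counter-monotonic (and every joint mix is $\Sigma$-counter-monotonic), and the counter-monotonic distribution is the unique Fr\'echet--Hoeffding lower bound of the class. This immediately produces a \emph{single} random vector that is simultaneously counter-monotonic and a joint mix, so Theorem \ref{th:PCT}(iii) with $S=c$ constant reads off \eqref{eq:dis} with $a=c-m$ and $\sum_i p_i=\sum_i\p(A_i)=1$ in one line. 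You instead derive the marginal form first, from the coexistence of two \emph{a priori different} witnesses, and only then verify equivalence by hand from the explicit two-point structure; this is more elementary (no $\Sigma$-counter-monotonicity, no uniqueness of the lower bound) but places all the weight on the rigidity step you yourself flag as the main obstacle. That step does close cleanly, and here is the one-line argument your sketch is missing: if $(Y_1,\dots,Y_n)$ is the joint mix with $\sum_i Y_i=c$, then $Y_i-m_i\le c-m$ for every $i$ because the other summands are at least their essential infima, hence $\esssup X_i-m_i\le c-m$; since the counter-monotonic witness satisfies $S-m=X_i-m_i$ on $A_i$ and $\bigcup_i A_i=\Omega$, this gives $S\le c$ pointwise, while $\E[S]=\sum_i\E[X_i]=c$ because the marginals agree (and are bounded, so integrable), forcing $S=c$ almost surely --- after which Theorem \ref{th:PCT}(iii) yields \eqref{eq:dis} exactly as in the paper. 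Your remaining verifications are correct: sufficiency of \eqref{eq:dis} via the lottery construction on an atomless space, the two-sided equivalence check for two-point marginals with $\sum_i p_i=1$, and the degenerate case \eqref{eq:degenerate}, where the bivariate counter-monotonic coupling of symmetric marginals is unique in distribution and is the antithetic one $X_j=c-X_i$, hence a joint mix. One small correction in your equivalence check: on $A_i$ the representation only forces $S-m\in\{0,a\}$, not $S-m=a$; the conclusion $S\equiv a+m$ still follows because $\sum_i\p(A_i\cap\{S-m=a\})=\sum_i p_i=1$.
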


%
%
%

\begin{proof} 
We first prove the equivalence statement in the last part of the theorem. 
  Suppose that the Fr\'echet class $\mathcal{F}_n(F_1, \dots, F_n)$ supports  both counter-monotonicity and joint mix dependence. 
\citet[Theorem 3.8]{PW15} shows that if a Fr\'echet class supports  a counter-monotonic random vector, then  a random vector is   counter-monotonic if and only if it is $\Sigma$-counter-monotonic, and moreover, a joint mix is always $\Sigma$-counter-monotonic.
Using these two facts, a joint mix is counter-monotonic for this Fr\'echet class. 
For the conserve statement, 
note that in  $\mathcal{F}_n(F_1, \dots, F_n)$
there exists a unique distribution function 
$$F(x_1,\dots,x_n)= \left(\sum_{i=1}^n F_i(x_i) - d + 1\right)_+, ~~~~(x_1,\dots,x_n)\in \R^n
$$ of a counter-monotonic random vector (Theorem 3.3 of \cite{PW15}).
Since a joint mix with marginal distributions $F_1, \dots, F_n$ is counter-monotonic, its distribution must coincide with $F$. 
This shows that $F$ is the distribution of a joint mix. 
 
Next, we prove the first part of the theorem.
For the ``if" statement,
assume that a Fr\'echet class   $\mathcal{F}_n(F_1, \dots, F_n)$ supports both counter-monotonicity and joint mix dependence.  By the above argument, $\mathcal{F}_n(F_1, \dots, F_n)$ supports a pairwise counter-monotonic joint mix $(X_1, \dots, X_n)$.   First, consider the case that at least three of $F_1,\dots,F_n$ are non-degenerate. Using \eqref{eq:END},
$$X_i=(c-m) \id_{A_i}+m_i, ~~~\mbox{for}~~~ i\in [n],$$
where $(A_1, \dots, A_n) \in\Pi_n$, $c$ is the center of $(X_1, \dots, X_n)$, either $m_i=\essinf(X_i)$ for all $i\in [n]$ or $m_i=\esssup(X_i)$ for all $i\in [n]$, and $m=\sum_{i=1}^n m_i$. It is clear that  $F_i$ has the form   \eqref{eq:dis} by setting $a=c-m$. 
If at most two of $F_1,\dots,F_n$ are degenerate, say $F_i$ and $F_j$, then a joint mix $(X_1,\dots,X_n)$ with marginal distributions $F_1,\dots,F_n$ satisfies $X_i=c-X_j$ for some $c\in \R$, and $X_k$ is a constant for each $k\in [n]\setminus \{i,j\}$.
This implies \eqref{eq:degenerate}.

Finally, we verify the converse statement. 
If $(F_1, \dots, F_n)$ has the form   \eqref{eq:dis}, then take $X_i=a\id_{A_i}+m_i$ with $(A_1, \dots, A_n) \in \Pi_n$ satisfying $\p(A_i)=p_i$ for $i\in[n]$, and  we have $(X_1, \dots, X_n)$ is counter-monotonic by Theorem \ref{th:PCT} and $\sum_{i=1}^n X_i=a+\sum_{i=1}^n m_i$.
If  $(F_1, \dots, F_n)$ has the form   \eqref{eq:degenerate},
then  by taking $X_i $ with distribution $F_i$, $X_j=c-X_i$ with distribution $F_j$ and $c\in \R$, and $X_k $ with distribution $F_k$ for each $k\in [n]\setminus \{i,j\}$, we can directly verify  that $(X_1,\dots,X_n)$ is a counter-monotonic joint mix.
\end{proof}
 
 From the proof of Theorem \ref{th:frechet} (ii), if at least three components of a pairwise counter-monotonic joint mix $\mathbf X=(X_1, \dots, X_n)$ are non-degenerate,  then it has the form $$X_i=a \id_{A_i}+m_i, ~~~\mbox{for}~~~ i\in [n]$$
where $(A_1, \dots, A_n) \in\Pi_n$, $a\in \R$ and $\mathbf m=( m_1, \dots, m_n )\in \R^n$.
If $a\ne 0$, then the random vector $(\mathbf X-\mathbf m)/a$ has a categorical distribution with
$n$ categories and probability vector $(\p(A_1),\dots,\p(A_n))$.

\begin{remark}
Theorem \ref{th:frechet} characterizes a  Fr\'echet class that supports both counter-monotonicity and joint mix dependence. 
 Fr\'echet classes that support (non-degenerate) pairwise counter-monotonicity are fully described by the conditions in Lemma \ref{lem:pairwise}. 
Whether a given Fr\'echet class supports joint mix dependence is a very challenging problem, with existing result summarized in \cite{PW15} and \cite{WW16}.
In risk aggregation problems, the
notion of joint mix dependence is more relevant, because  a joint mix usually
``approximately exists" for large dimensions, which leads to the main idea behind the Rearrangement Algorithm;   see \cite{EPR13,EPRWB14}, \cite{BV15} and \cite{BRV17}. In contrast,   counter-monotonicity is more relevant for risk sharing problems, which we discuss in the next section.
\end{remark}

\section{Optimal allocations in risk sharing for quantile agents}
\label{sec:6}

We now formally establish the link between counter-monotonicity and Pareto-optimal allocations in risk sharing problems  for quantile agents.

We first describe the basic setting. 
A quantile agent assesses risk by its quantile, also known as the risk measure Value-at-Risk (VaR) in risk management. 
Following the convention of \cite{ELW18}, the VaR at level $\alpha \in (0,1)$ is defined as$$
	\VaR_\alpha(X)=\inf\{x\in \R: \p(X\le x) \ge 1-\alpha\},~~~X\in \mathcal X,
	$$ 
	where $\mathcal X$ is the set of all random variables in the probability space.
	Moreover, write $\VaR_\alpha=-\infty$ on $\mathcal X$ for $\alpha \ge 1$, although our agents use $\VaR_\alpha$ for $\alpha \in(0,1)$.
It is important to highlight that quantile agents  with level $\alpha\in (0,1)$ are not risk averse (\cite{RS70}). 
	
We consider the risk sharing problem for $n\ge 3$ quantile agents
with levels $\alpha_1,\dots,\alpha\in (0,1)$. For a given $S\in \mathcal X$,  the set of \emph{allocations} of $S$ is 
 $$ 
\mathbb{A}_n(S)=\left\{(X_1,\ldots,X_n)\in \mathcal X^n: \sum_{i=1}^nX_i=S\right\}.   $$
 An allocation $(X_1,\dots,X_n)\in \mathbb{A}_n(S)$  is     \emph{Pareto optimal} if for any
$(Y_1,\dots,Y_n)\in \mathbb{A}_n(S)$ satisfying $\VaR_{\alpha_i}(Y_i)\leq\VaR_{\alpha_i}(X_i)$ for all $i\in [n]$, we have $\VaR_{\alpha_i}(Y_i)=\VaR_{\alpha_i}(X_i)$ for all $i\in [n]$.   Pareto optimality of $(X_1,\dots,X_n) \in \mathbb{A}_n(S)$ is equivalent to 
\begin{align}
\label{eq:optimality}
 \sum_{{i=1}}^n\VaR_{\alpha_i}(X_i)=\inf\left\{\sum_{i=1}^n\VaR_{\alpha_i}(Y_i):  (Y_1, \dots, Y_n) \in \mathbb{A}_n(S)\right\}=\VaR_{\sum_{i=1}^n \alpha_i}(S),
\end{align}
where the first equality is \citet[Proposition 1]{ELW18} and the second equality is \citet[Corollary 2]{ELW18}. Using \eqref{eq:optimality}, we obtain that the existence of a Pareto-optimal allocation is equivalent to  $ \sum_{i=1}^n \alpha_i<1$; this is also given by Theorem 3.6 of \cite{WW20}.  
  For this reason, we say that the $n$ quantile agents are \emph{compatible}  if  $ \sum_{i=1}^n \alpha_i<1$  holds, meaning that a Pareto-optimal allocation exists for some $S$, and equivalently, for every $S$.  

 The following theorem shows that, under some conditions of the total risk $S$ to share,  the 
risk sharing problem for any quantile agents admits a  pairwise counter-monotonic Pareto-optimal allocation, and 
every pairwise counter-monotonic allocation is Pareto optimal for some agents.
Moreover, comonotonic allocations are never Pareto optimal. 
Recall that by Lemma \ref{lem:pairwise}, 
a pairwise counter-monotonic random vector  $ (X_1,\dots,X_n)$ satisfies either 
\eqref{eq:PCT1} or \eqref{eq:PCT2}.

\begin{theorem}\label{th:quantile}
For $S\in \X$, the following hold.   
\begin{enumerate}[(i)]
\item  If $S$ is bounded from below, then for any compatible quantile agents
there exists a pairwise counter-monotonic allocation of $S$ which is Pareto optimal.
\item If $\p(S=\essinf S)>0$, then every type-\eqref{eq:PCT1} pairwise counter-monotonic allocation of $S$   is Pareto optimal for some quantile agents.
\item If $S$ is continuously distributed, then a comonotonic allocation of $S$ is never Pareto optimal for any quantile agents.

\end{enumerate} 
\end{theorem}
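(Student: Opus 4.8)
Throughout I will use the criterion in \eqref{eq:optimality}: an allocation $(X_1,\dots,X_n)\in\A_n(S)$ is Pareto optimal for agents at levels $\alpha_1,\dots,\alpha_n$ if and only if $\sum_{i=1}^n\VaR_{\alpha_i}(X_i)=\VaR_{\sum_i\alpha_i}(S)$, and since the left side is always at least the right side, it suffices in Parts (i)--(ii) to exhibit a (pairwise counter-monotonic) allocation and levels realizing the equality, and in Part (iii) to rule it out.

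\emph{Part (i).} A translation of $S$ can be absorbed into additive constants $m_1,\dots,m_n$ summing to $\essinf S$, affecting neither pairwise counter-monotonicity nor either side of \eqref{eq:optimality}, so I may assume $\essinf S=0$ and look for an allocation $X_i=S\id_{A_i}$ with $(A_1,\dots,A_n)\in\Pi_n$, which is pairwise counter-monotonic by Theorem \ref{th:PCT}(ii). Put $\beta=\sum_i\alpha_i<1$, $q=\VaR_\beta(S)$, $p_0=\p(S>q)\le\beta$. Using atomlessness, split $\{S>q\}$ into $E_1,\dots,E_n$ with $\p(E_i)=(\alpha_i/\beta)p_0\le\alpha_i$, set $A_1=E_1\cup\{S\le q\}$ and $A_i=E_i$ for $i\ge2$. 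For $i\ge2$ one has $\p(X_i>x)=\p(E_i\cap\{S>x\})\le\p(E_i)\le\alpha_i$ for all $x\ge0$, so $\VaR_{\alpha_i}(X_i)=0$. For $i=1$, $\VaR_{\alpha_1}(X_1)\le q$ since $\p(X_1>q)=\p(E_1)\le\alpha_1$, and if $\VaR_{\alpha_1}(X_1)<q$, i.e. $\p(x<S\le q)\le\alpha_1-\p(E_1)$ for some $x<q$, then $\p(S>x)\le p_0+\alpha_1(\beta-p_0)/\beta\le\beta$, contradicting $q=\VaR_\beta(S)$; hence $\VaR_{\alpha_1}(X_1)=q$. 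Summing gives $\sum_i\VaR_{\alpha_i}(X_i)=q=\VaR_\beta(S)$.

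\emph{Part (ii).} Write the given type-\eqref{eq:PCT1} allocation via Theorem \ref{th:PCT} as $X_i=(S-m)\id_{A_i}+m_i$ with $(A_1,\dots,A_n)\in\Pi_n$, $m_i=\essinf X_i$, $m=\sum_i m_i$, so $S\ge\essinf S\ge m$; set $b=\essinf S$. The key step is that $\p(S=b)>0$ together with $\{S=b\}=\bigcup_i(A_i\cap\{S=b\})$ produces an index $k$ with $\p(A_k\cap\{S=b\})>0$. Then take $\alpha_i=\p(A_i\cap\{S>m\})$ for $i\ne k$ (replacing a vanishing value by a small positive one), and $\alpha_k$ anywhere in $[\p(A_k\cap\{S>b\}),\p(A_k))$, a nonempty interval by the choice of $k$. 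Four checks remain: $\sum_i\alpha_i<1$, because $\sum_{i\ne k}\alpha_i$ is essentially $\p(A_k^c\cap\{S>m\})\le 1-\p(A_k)$ while $\alpha_k<\p(A_k)$; $\VaR_{\alpha_i}(X_i)=m_i$ for $i\ne k$; $\VaR_{\alpha_k}(X_k)=b-m+m_k$, using that $X_k$ equals $m_k$ off $A_k$ and at least $b-m+m_k$ on $A_k$, combined with the two bounds on $\alpha_k$; and $\VaR_{\sum_i\alpha_i}(S)=b$, since $\sum_i\alpha_i\ge\p(S>b)$ forces it $\le b$ and $\sum_i\alpha_i<1$ forces it $\ge\essinf S$. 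Adding up, $\sum_i\VaR_{\alpha_i}(X_i)=(b-m+m_k)+\sum_{i\ne k}m_i=b=\VaR_{\sum_i\alpha_i}(S)$. This is the delicate part and the expected main obstacle: here the partition is prescribed rather than chosen, so admissible levels must be read off from it, and everything hinges on finding a cell $A_k$ carrying positive mass of $\{S=\essinf S\}$ (which is precisely where the hypothesis $\p(S=\essinf S)>0$ is used) and then threading $\alpha_k$ so that agent $k$'s Value-at-Risk lands exactly at $\VaR_{\sum_i\alpha_i}(S)=\essinf S$ while the others sit at their essential infima.

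\emph{Part (iii).} Suppose a comonotonic allocation of the continuously distributed $S$ were Pareto optimal for levels $\alpha_1,\dots,\alpha_n$, and put $\gamma=\sum_i\alpha_i<1$. By Lemma \ref{th:como}(iii) there are continuously increasing $f_1,\dots,f_n$ with $X_i=f_i(S)$, and $\sum_i f_i$ equals the identity on the support of $S$ because $\sum_i X_i=S$. As each $f_i$ is continuous increasing, $\VaR_{\alpha_i}(X_i)=f_i(\VaR_{\alpha_i}(S))$, so Pareto optimality reads $\sum_i f_i(\VaR_{\alpha_i}(S))=\VaR_\gamma(S)=\sum_i f_i(\VaR_\gamma(S))$, the last equality because $\VaR_\gamma(S)$ lies in the support of $S$. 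Since $\alpha_i\le\gamma$ and $f_i$ is increasing, the $i$th summand on the left is at least the $i$th on the right, so equality forces $f_i$ to be constant on $[\VaR_\gamma(S),\VaR_{\alpha_i}(S)]$ for every $i$. Because $S$ has a continuous distribution function $F$, one has $F(\VaR_\alpha(S))=1-\alpha$ for each $\alpha\in(0,1)$, so distinct levels give distinct Values-at-Risk; with $\alpha^*=\max_i\alpha_i<\gamma$ this yields $\VaR_\gamma(S)<\VaR_{\alpha^*}(S)$. Hence all $f_i$, and thus $\sum_i f_i$, are constant on $[\VaR_\gamma(S),\VaR_{\alpha^*}(S)]$. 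But by continuity of $F$ both endpoints are points of strict increase of $F$, hence distinct points of the support of $S$; evaluating $\sum_i f_i=\mathrm{id}$ at them gives $\VaR_\gamma(S)=\VaR_{\alpha^*}(S)$, a contradiction.
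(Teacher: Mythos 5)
Your proof is correct, and in parts (i) and (iii) it takes a genuinely different route from the paper's. For (i), the paper simply invokes Theorem 2 of \cite{ELW18}, which already supplies a Pareto-optimal allocation of the form $(S-m)\id_{A_i}$ plus a constant, and then notes that taking $m=\essinf S$ makes it pairwise counter-monotonic via Theorem \ref{th:PCT}; you instead rebuild that allocation from scratch, splitting $\{S>q\}$ in proportion to the $\alpha_i$ and verifying the VaR identities by hand, which is self-contained but in effect reproves the cited result. For (ii), your argument is the paper's in a unified form: the paper normalizes $\essinf X_i=0$ and splits into the cases $\p(B\cap A)=0$ and $\p(B\cap A)>0$ (in your notation, $\essinf S=m$ versus $\essinf S>m$), whereas you locate a cell $A_k$ carrying mass of $\{S=\essinf S\}$ and thread $\alpha_k$ through one interval; both reduce to $\sum_i\VaR_{\alpha_i}(X_i)\le\essinf S\le\VaR_{\sum_i\alpha_i}(S)$ and the implication \eqref{eq:optimality-2}. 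For (iii), the paper is much shorter: comonotonic additivity and monotonicity of VaR give $\sum_i\VaR_{\alpha_i}(X_i)\ge\VaR_{\beta}(S)>\VaR_{\sum_i\alpha_i}(S)$ with $\beta=\max_i\alpha_i$, while your constancy argument for the $f_i$ reaches the same contradiction with more machinery (including the nontrivial but standard fact that $\VaR_\alpha(f(S))=f(\VaR_\alpha(S))$ for continuous increasing $f$). Two small points to tidy: in (ii) the left endpoint $\p(A_k\cap\{S>\essinf S\})$ of your interval can be $0$, so $\alpha_k$ must be chosen strictly positive; and the appeal to Theorem \ref{th:PCT} for the representation $X_i=(S-m)\id_{A_i}+m_i$ formally assumes at least three non-degenerate components, so the degenerate cases deserve a sentence (they are immediate).
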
 

\begin{proof} 
(i)  Let $\alpha_1,\dots,\alpha_n\in (0,1)$ be the VaR levels of the quantile agents.
Compatibility  of the agents means $\sum_{i=1}^n \alpha_i<1$.
In this case, a Pareto-optimal allocation $(X_1,\dots,X_n)$ of $S$ is given by Theorem 2 of \cite{ELW18}, with the form
$$
X_i= (X-m) \id_{A_i} ,~ i \in[n-1]~~~\mbox{and}  ~~~X_n = (X-m)\id_{A_n}+m
$$
for some  $(A_1,\dots,A_n)\in \Pi_n$.  By setting $m=\essinf S$,  $(X_1,\dots,X_n)$ is pairwise counter-monotonic by Theorem \ref{th:PCT}.  

(ii)  Note that shifting $X_1,\dots,X_n$ by arbitrary constants, and adjusting $S$ correspondingly, does not affect its Pareto optimality due to \eqref{eq:optimality}.
Moreover, \eqref{eq:PCT1} guarantees that at  most one of $X_1,\dots,X_n$ is not bounded from below, and further $\p(S=\essinf S)>0$ guarantees that 
this can only happen if all $X_1,\dots,X_n$ are bounded from below. 
Therefore, we can, 
  without loss of generality,   assume $\essinf X_i=0$ for each $i\in [n]$. 
  
Let $B=\{S=\essinf S\}$   and $A=\bigcup_{i=1}^n \{X_i> 0 \}$. 
First, if $\p(B\cap A) =0$, then we let $\alpha_i= \p(X_i>0)+\p(B)/(2n)>0$ for $i\in [n]$. Note that $$\sum_{i=1}^n \alpha_i = \sum_{i=1}^n \p(X_i>0) +\frac 12  \p(B)=\p(A) + \frac 12\p(B) <\p(A)+\p(B)= \p(A\cup B)\le 1.$$
It is clear that $\VaR_{\alpha_i}(X_i)=0$ for each $i\in [n]$, leading to $\sum_{i=1}^n \VaR_{\alpha_i}(X_i) =0 \le  \essinf S \le \VaR_{\sum_{i=1}^n \alpha_i} (S)$.
Note that 
\begin{align}
\label{eq:optimality-2}
\sum_{i=1}^n \VaR_{\alpha_i}(X_i) \le \VaR_{\sum_{i=1}^n \alpha_i} (S)
~\Longrightarrow~ \mbox{$(X_1,\dots,X_n)$ is Pareto optimal}.
\end{align}
This is because  Corollary 1 of \cite{ELW18} gives 
 $ \sum_{i=1}^n \VaR_{\alpha_{i}}(X_i)
 \ge  \VaR_{\sum_{i=1}^n \alpha_i} (S),$
and this leads to $\sum_{i=1}^n \VaR_{\alpha_{i}}(X_i)
 =  \VaR_{\sum_{i=1}^n \alpha_i} (S)$ in \eqref{eq:optimality}, which 
 gives Pareto optimality of $(X_1,\dots,X_n)$ as we see in part (i).

Next, assume $\p(B\cap A)>0$. Then there exists $j\in [n]$ such that 
$\p(B\cap \{X_j>0 \})>0$. Let $\epsilon =\p(B\cap \{X_j>0\})/(2n)$. 
Take $\alpha_i =  \p( X_i> 0 ) +\epsilon >0$ for $i\in[n ]\setminus \{j\}$
and
$\alpha_j= \p(\{X_j>0\}\setminus B) +\epsilon$. 
By Lemma \ref{lem:pairwise}, 
$$
1\ge \sum_{i=1}^n \p(X_i> 0 ) =
 \sum_{i= 1}^n(\alpha_i-\epsilon)  + \p(B\cap \{X_j>0\})  = \sum_{i=1}^n \alpha_i + n \epsilon,
$$
and hence $\sum_{i=1}^n\alpha_i<1$.
By definition of $\alpha_1,\dots,\alpha_n$, we have 
$\VaR_{\alpha_i}(X_i)= 0$ for $i\in[n ]\setminus \{j\}$.
Moreover, note that $X_j=S$ on $\{X_j> 0\}$ and
$$\p(\{X_j= \essinf S\}\cap \{X_j>0\})=\p(B\cap \{X_j>0\}) =2n \epsilon,$$
which implies $\p(X_j> \essinf S)  = \p(X_j>0) - 2n\epsilon  < \alpha_j. $
Therefore, 
$\VaR_{\alpha_j}(X_j)\le \essinf S $, 
leading to $\sum_{i=1}^n \VaR_{\alpha_i}(X_i) \le \essinf S \le \VaR_{\sum_{i=1}^n \alpha_i} (S)$. Hence, we obtain Pareto optimality of $(X_1,\dots,X_n)$  via \eqref{eq:optimality-2}.

(iii) For a comonotonic allocation $(X_1, \dots, X_n)$ of $S$, 
using decreasing monotonicity of $\alpha \mapsto \VaR_\alpha$ and comonotonic additivity of $\VaR_\alpha$, 
we have 
\begin{align*}
\sum_{i=1}^n \VaR_{\alpha_i}(X_i)\ge \sum_{i=1}^n \VaR_{\beta }  (X_i)  = \VaR_{\beta } (S),
\end{align*} 
where we write $\beta =\max\{\alpha_1,\dots, \alpha_n\}$.
As $S$ is continuously distributed, $\VaR_\alpha(S)$ is strictly decreasing in $\alpha$. Noting that $\beta< \sum_{i=1}^n \alpha_i$, we have 
$\VaR_\beta  (S)>\VaR_{\sum_{i=1}^n \alpha_i}(S)$. Therefore, the comonotonic allocation $(X_1, \dots, X_n)$ is not Pareto optimal by \eqref{eq:optimality}. 
\end{proof}
 
Theorem \ref{th:quantile} states that allocations with a pairwise counter-monotonic structure solve the problem of sharing risk among quantile agents. 
For instance, the lottery-type allocation in Example \ref{ex:simple} is Pareto optimal for some quantile agents. 
Further, Theorem \ref{th:quantile} (iii) states that comonotonic allocations can never be Pareto optimal for quantile agents if the total risk is continuously distributed. As mentioned, this is in stark contrast with the risk sharing problem with risk-averse agents, for which comonotonic allocations are always optimal. The latter result, due to the comonotonic improvements of \cite{LM94},  is well-known; see also \cite{JST08} and \cite{R13}. Moreover, when all agents are strictly risk averse, only comonotonic allocations are Pareto optimal  (see \citet[Proposition 4]{LLW23} for the case when preferences are modelled by strictly concave distortion functions).





As a symmetric statement to Theorem \ref{th:quantile}, 
if a random vector $(X_1,\dots,X_n)$ is pairwise counter-monotonic of type \eqref{eq:PCT2}, then 
 it is the maximizer of a risk sharing problem for some quantile agents.

 Theorem \ref{th:quantile} (i) assumes that $S$ is bounded from below. This is needed because any type-\eqref{eq:PCT1} pairwise counter-monotonic allocation is bounded from below. 
Theorem \ref{th:quantile} (ii) assumes $\p(S=\essinf S)>0$. 
 In case $\p(S>\essinf S)=0$, 
a pairwise counter-monotonic allocation of type \eqref{eq:PCT1} may not be Pareto optimal    for  any quantile agents with levels  in $(0,1)$.
A counter-example is provided in Example \ref{ex:1} below.
Theorem \ref{th:quantile} (iii) assumes that $S$ is continuously distributed. 
This condition is also needed for the result to hold. For instance, if $S=1$, then the allocation $(1/n,\dots,1/n)$ is Pareto optimal for any compatible quantile agents, violating the impossibility statement on Pareto optimality.
\begin{example} \label{ex:1}
Suppose that $S$ is uniformly distributed on $[0,1]$, 
and $X_i=S\id_{A_i}$ for $(A_1,\dots,A_n)\in \Pi_n$ independent of $S$ with $\p(A_i)>0$ for each $i\in [n]$. 
We will see that the pairwise counter-monotonic allocation $(X_1,\dots,X_n)$ is not Pareto optimal for any quantile agents with levels $\alpha_1,\dots,\alpha_n \in (0,1)$. 
If $\sum_{i=1}^n \alpha_i\ge 1$, there does not exist any Pareto-optimal allocation.
If $\sum_{i=1}^n\alpha_i<1$, then
$$
\sum_{i=1}^n \VaR_{\alpha_i} (X_i) = \sum_{i=1}^n \left(1 -\frac{\alpha_i}{\p(A_i)}\right)_+= \sum_{i=1}^n \left(\frac{\p(A_i)-\alpha_i}{\p(A_i)}\right)_+
$$
and 
$$\VaR_{\sum_{i=1}^n \alpha_i} (S)=1-\sum_{i=1}^n \alpha_i= \sum_{i=1}^n (\p(A_i)-\alpha_i)\le \sum_{i=1}^n \left(\frac{\p(A_i)-\alpha_i}{\p(A_i)}\right)_+= \sum_{i=1}^n \VaR_{\alpha_i} (X_i).$$
Using the condition \eqref{eq:optimality}, if $(X_1,\dots,X_n)$ is Pareto optimal, then the inequality above is an equality; this implies 
 $\alpha_i=\p(A_i)$ for each $i\in [n]$. However, this further implies $ \sum_{i=1}^n \alpha_i = \sum_{i=1}^n\p(A_i)=1$ conflicting 
$\sum_{i=1}^n \alpha_i<1$. 
 \end{example}
 
 The next example illustrates that for the same $S$ in Example \ref{ex:1} and compatible quantile agents,
a pairwise counter-monotonic Pareto-optimal allocation
exists as implied by Theorem \ref{th:quantile} (i). 
 
 \begin{example}
 Let  $S$ be uniformly distributed on $[0,1]$ and $\alpha_1, \dots, \alpha_n \in (0,1)$ with $\sum_{i=1}^n\alpha_i<1$.
 Take $(A_1,\dots,A_{n})\in \Pi_n$ such that $\bigcup_{i=1}^{n-1}A_i=\{S\ge 1-\sum_{i=1}^{n-1} \alpha_i\}$ and $\p(A_i)=\alpha_i$ for $i\in [n-1]$.
  Let 
  $X_i=S\id_{A_i}$  for $i\in [n]$.
 We can verify that $\VaR_{\alpha_i}(X_i)=0$ for $i\in [n-1]$ and
 $$\VaR_{\alpha_n}(X_n) = 
 \VaR_{\alpha_n}\left (S\id_{\{S<1-\sum_{i=1}^{n-1} \alpha_i\}}\right ) 
 = 1-\sum_{i=1}^{n} \alpha_i=\VaR_{\sum_{i=1}^n \alpha_i} (S). $$
This shows that $(X_1,\dots,X_n)$ is Pareto optimal.
It is also pairwise counter-monotonic by Theorem \ref{th:PCT}.
Note that although the allocation $(X_1,\dots,X_n)$ here has  the same form 
$(S\id_{A_1},\dots,S\id_{A_n})$
as the one in Example \ref{ex:1}, the specification of $(A_1,\dots,A_n)$ is different in the two examples, leading to opposite conclusions on optimality. 
\end{example}

\begin{remark}
One may notice that the condition on $S$ in Theorem \ref{th:quantile} part (ii) and that in part (iii), although both quite weak,
are actually conflicting. 
This is not a coincidence, because comonotonicity and counter-monotonicity have a non-empty intersection: 
A random vector is both comonotonic and counter-monotonic if and only if it has at most one non-degenerate component.
Therefore, we cannot have both conclusions in  parts (ii) and (iii) for the same $S$.
\end{remark}

\begin{remark}
As shown by \cite{ELW18}, the same pairwise counter-monotonic allocation which is Pareto optimal for quantile agents is also optimal for the more general Range Value-at-Risk (RVaR) agents. Therefore, the conclusion in Theorem \ref{th:quantile} also applies to the RVaR agents. 
Another appearance of  pairwise counter-monotonicity in optimal allocations  
is obtained by \cite{LLW23}, where it is shown that for agents using inter-quantile differences, a Pareto-optimal allocation is the sum of two  pairwise counter-monotonic random vectors. 
All   discussions above assume homogeneous beliefs; that is, all agents use the same probability measure $\p$.
In the setting of heterogeneous beliefs, \cite{ELMW20} showed  that for   Expected Shortfall  agents, 
a Pareto-optimal allocation above certain constant level 
also has a pairwise counter-monotonic structure; see their Proposition 3.
Generally, agents using the dual utility model of \cite{Y87}, including the quantile-based models above,  
have quite different features in risk sharing and other optimization problems compared to those with expected utility agents. 
For the optimal payoff of Yaari agents in portfolio choice, see 
\cite{BDV22}.
\end{remark}

 \begin{example}\label{ex:auction}
{ We illustrate that counter-monotonicity may also be the structure of an optimal  allocation outside the dual utility of \cite{Y87}.
  Let  $(\Omega,\mathcal A,\p)$ be an atomless probability space, $S=1$ and $\alpha>0$. Consider the problem 
\begin{align*}
\mbox{to maximize~~~}  \sum_{i=1}^n \E\left [\alpha \id_{\{X_i \geq 1\}}\right] ~~~\mbox{subject to}~(X_1,\dots,X_n)\in \mathbb A_n(S) ~\mbox{and}~X_i\geq 0, \mbox{ for } i\in [n].
\end{align*}
It is straightforward  to verify that the set of maximizers is 
$$\mathbb A^*=\left\{(\id_{A_1},\dots,\id_{A_n})\in \mathbb A_n(S): (A_1,\dots,A_n)\in \Pi_n\right\},$$
which contains only counter-monotonic allocations.
This problem can be interpreted as  the problem of sharing $S=1$ among $n$ expected utility maximizers with common   utility function $u(x)=\alpha \id_{\{x \geq 1\}}$ for $\alpha >0$. 
The optimization problem is thus a social planner's problem, and the set $\mathbb A^*$ contains all Pareto-optimal allocations for this problem. 
The allocations satisfying $\p(A_i)=\p(A_j)$ for every $i\neq j$ are of particular interest, as they are common in auction theory as the \textit{random tie-breaking rule}. The variable $S$ can be understood as an indivisible good that was auctioned, and the parameter $\alpha$ as the net utility of a series of $n$ agents with quasi-linear utilities $v(X, t)= \theta X - t$ having bid the same amount $0\leq t <\theta$. It is straightforward to see that these allocations are the only \textit{fair} allocations, in the sense that all agents have the same expected utility. In other words, a fair lottery (which is counter-monotonic) is the only fair way to distribute the indivisible good among people who value it equally.}
 \end{example}

\section{Conclusion}

We provide a series of 
technical results on the representation (Theorem \ref{th:PCT}) and invariance property (Theorem \ref{th:CT}) of pairwise counter-monotonicity, as well as their connection to 
negative association (Theorem \ref{th:NA}), joint mix dependence (Theorem \ref{th:frechet}), and optimal allocations for quantile agents (Theorem \ref{th:quantile}). 
Our paper is motivated by the recently increasing attention in counter-monotonicity and negative dependence, and it fills the gap between the relatively scarce studies on pairwise counter-monotonicity  in the literature
and the wide appearance of this dependence structure in modern applications,
in particular, in risk sharing problems with agents that are not using expected utilities.

In general, studies of 
negative dependence
and positive dependence are highly asymmetric in nature, 
with negative dependence being 
 more challenging to study  in various applications of risk management and statistics. 
In addition to the negative dependence  concepts  we considered in this paper, 
some other notions have been studied in the recent literature, and the interested reader is referred to \cite{ASB13}, \cite{LA14}, \cite{LCA17} and \cite{CRW22}, as well as the monographs of \cite{J97,J14}.


\bigskip
 \textbf{Acknowledgements.} We thank the handling editor and two anonymous referees for helpful comments on a previous version of the paper. 
 Ruodu Wang acknowledges support from the Canada Research Chairs (CRC-2022-00141) and the Natural Sciences and Engineering Research Council of Canada (NSERC Grant No.~RGPIN-2018-03823).

\end{document}